\DeclareMathOperator{\dist}{d}
\newcommand{\R}{\mathbb{R}}
\title{Pattern Formation for Fat Robots with Lights} 
\author{Rusul J. Alsaedi}{University of Sydney, Australia}{rals2984@uni.sydney.edu.au}{}{}%TODO mandatory, please use full name; only 1 author per \author macro; first two parameters are mandatory, other parameters can be empty. Please provide at least the name of the affiliation and the country. The full address is optional. Use additional curly braces to indicate the correct name splitting when the last name consists of multiple name parts.
\author{Joachim Gudmundsson}{University of Sydney, Australia}{joachim.gudmundsson@sydney.edu.au}{https://orcid.org/0000-0002-6778-7990}{}
\author{André {van Renssen}}{University of Sydney, Australia}{andre.vanrenssen@sydney.edu.au}{https://orcid.org/0000-0002-9294-9947}{}
\authorrunning{R.\,J. Alsaedi, J. Gudmundsson, and A. van Renssen} %TODO mandatory. First: Use abbreviated first/middle names. Second (only in severe cases): Use first author plus 'et al.'
\keywords{Pattern formation, Robots with lights, Fat robots, Obstructed visibility, Collision avoidance} %TODO mandatory; please add comma-separated list of keywords
\begin{document}

\maketitle

%TODO mandatory: add short abstract of the document
\begin{abstract}
Given a set of $n\geq 1$ unit disk robots in the Euclidean plane, we consider the Pattern Formation problem, i.e., the robots must reposition themselves to form a given target pattern. This problem arises under obstructed visibility, where a robot cannot see another robot if there is a third robot on the straight line segment between the two robots. 
Recently, this problem was solved in the asynchonous model for fat robots that agree on at least one axis in the robots with lights model where each robot is equipped with an externally visible persistent light that can assume colors from a fixed set of colors~\cite{bose2020arbitrary}. In this work, we reduce the number of colors needed and remove the axis-agreement requirement in the fully synchronous model. In particular, we present an algorithm requiring 7 colors when scaling the target pattern is allowed and an 8-color algorithm if scaling is not allowed. Our algorithms run in $O(n) + O(q \log n)$ rounds with probability at least $1 - n^{-q}$.
\end{abstract}

\section{Introduction}
We consider a set of $n\geq 1$ unit disk robots in the two-dimensional Euclidean plane and aim to position these robots to form a given pattern. This problem is commonly referred to as the Pattern Formation problem. The robots work under the classical oblivious robots model \cite{Flocchini2012} which means they are autonomous (no external control), anonymous (no unique identifiers), indistinguishable (no external markers), history-oblivious (no memory of activities done in the past), silent (no means of direct communication), and possibly disoriented (no agreement on their coordinate systems). All robots execute the same algorithm and operate following Look-Compute-Move (LCM) cycles \cite{DAS2016171} (i.e., when a robot becomes active, it uses its vision to get a snapshot of its surroundings (Look), computes a destination point based on this snapshot (Compute), and finally moves towards the computed destination (Move)). In this paper, the robots are scheduled using the fully synchronous model, where there is a (discretized) notion of \emph{rounds} and in every round all robots are activated. 

Throughout the long history of this classical robot model many applications have been considered. Examples include coverage, exploration, intruder detection, data delivery, and symmetry breaking problems~\cite{cieliebak2012distributed}. Unfortunately, frequently the robots in this model were considered to be point robots which do not occupy any space.

This assumption has an important consequence in terms of visibility. When working with point robots, unobstructed visibility is a common assumption. This means that any three collinear robots are mutually visible to each other. However, in reality robots are not dimensionless points and hence they may block the view of other (collinear) robots. Recently, the capabilities of these robots under obstructed visibility has been studied  \cite{Agathangelou2013,https://doi.org/10.48550/arxiv.2206.14423,Bolla2012,Chaudhuri15,Cohen2008,Cord-Landwehr2011,Czyzowicz2009,di2017mutual,Dutta2012,Flocchini2015,Luna2014,Luna2014b,Sharma2015b,Sharma2015,Sharma2015c,Vaidyanathan2015}. Under obstructed visibility, robot $r_i$ can see robot $r_j$ if and only if there is no third robot on the straight line segment connecting $r_i$ and $r_j$.

One of the problems studied under obstructed visibility (and the focus of this paper) is the Pattern Formation problem: Starting from arbitrary distinct positions in the plane, determine a schedule to reposition the robots such that they form the given (target) pattern without collisions \cite{bose2020arbitrary,bose2021arbitrary,vaidyanathan2022fast}.
We say that two robots collide if at any time they share the same position. In previous work, the target pattern is allowed to be scaled, rotated, translated, and reflected. 

To tackle this and other robot problems, a generalization of the classical model has recently become the focus of significant interest~\cite{https://doi.org/10.48550/arxiv.2206.14423,bose2020arbitrary,bose2021arbitrary,di2017mutual,Luna2014,Luna2014b,Peleg2005,Sharma2015b,Sharma2015,Vaidyanathan2015,vaidyanathan2022fast}. This variant, called the luminous robots model (or robots with lights model), equips robots with an externally visible light which can assume different colors from a fixed set. These lights are persistent, i.e., the color of the light is not erased at the end of the LCM cycle. Except the assumption of the availability of lights, the robots work similarly to the classical model. This model corresponds to the classical oblivious robots model when the number of colors in the set is 1 \cite{di2017mutual,Flocchini2012}. One objective in this model is to solve the problem while minimizing the size of the color set.

\subsection{Related Work}
There has been considerable work on the Pattern Formation problem for point robots \cite{bose2022distributed,bose2020pattern,chaudhuri2014pattern,fujinaga2015pattern,ghosh2022move,kundu2022arbitrary,pattanayak2020distributed} and some of these also considered the lights model while solving the problem \cite{bose2021arbitrary,vaidyanathan2022fast}. Other work in this area includes that by Cicerone {\it et al.}~\cite{cicerone2021solving}, who presented an algorithm to solve the Pattern Formation problem for point robots with chirality (the robots agree on the orientation of the axes, i.e., on the meaning of clockwise), and Flocchini {\it et al.}~\cite{flocchini2008arbitrary}, who studied the problem for point robots in the asynchronous setting, but they required that the robots agree on their environment and observe the positions of the other robots. While these works provided techniques to overcome various difficulties, they did not take the physical extents of the robots into account. Unfortunately, the techniques developed for point robots cannot be applied directly to solve the Pattern Formation for fat robots, due to the effect these extends have on collision avoidance. 

The work most closely related to our result is due to Bose {\it et al.}~\cite{bose2020arbitrary}. They studied the Pattern Formation problem for fat robots in the robots with lights model and used 10 colors in the asynchronous model. Unfortunately, their solution assumes that all robots agree on an axis of the coordinate system. Our solutions remove this assumption and reduces the number of colors required in the fully synchronous model. 

Other related work for fat robots includes the work by Kundu {\it et al.}~\cite{kundu2022arbitrary}, who studied the Pattern Formation problem for fat robots with lights on an infinite grid with one axis agreement. They solved the problem using 9 colors. Unfortunately, they did not bound the running time of their algorithm. 

\subsection{Contributions}
We first present two algorithms using at most 11 colors, which through careful analysis we improve to use only 7 colors when scaling of the pattern is allowed and 8 if this is not allowed. None of our algorithms require the pattern to be rotated, translated, or reflected. Our algorithms require $O(n) + O(q \log n)$ rounds. Here $q>0$ is related to leader election, which can be solved in $O(q \log n)$ rounds with probability at least $1-n^{-q}$~\cite{vaidyanathan2022fast}. 

Our algorithms work under the fully synchronous model and are collision-free. Interestingly, unlike previous work, our algorithms do not require any additional assumptions on the capabilities of the robots or any shared information or coordinate system. The moves of the robots are rigid, i.e., an adversary does not have the power to stop a moving robot before reaching its computed destination~\cite{Flocchini2012}.

One of the main challenges to obtain these results is that the locations of the robots in the provided input pattern can be very close together, leading to the potential issue of a leader robot being unable to leave the pattern constructed so far, when we try to use the well-known method of having the leader move the robots in place by having them follow it. We address this issue in two ways. First, if we can scale the pattern with respect to the coordinate system of the leader, we do this in such a way that we guarantee that there is always enough space for the leader to lead a robot to its intended location, and leaving sufficient additional space for the leader to leave the pattern again afterwards. If this form of scaling is not allowed, we developed a novel method of the leader ``pushing'' robots instead of having them follow the leader. This allows us to move the robots in place without the leader moving into the pattern itself, thus avoiding the issue of it having insufficient space to move out again altogether. 

\section{Preliminaries}
Consider a set of $n\geq 1$ anonymous robots $R=\{r_1,r_2,\ldots,r_n\}$ operating in the Euclidean plane $\R^2$. The number of robots $n$ is not assumed to be known to the robots.
We assume that each robot $r_i\in R$ is a non-transparent disk with diameter 1. The center of the robot $r_i$ is denoted $c_i$, and the position of $c_i$ is also said to be the position of $r_i$. We denote by $\dist(r_i,r_j)$ the Euclidean distance between the two robots $c_i$ to $c_j$. For simplicity, we use $r_i$ to denote both the robot $r_i$ and the position of its center $c_i$. 
Each robot $r_i$ has its own coordinate system, and it knows its position with respect to this coordinate system.
Robots may not agree on the orientation of their coordinate systems. However, since all the robots are of unit size, they implicitly agree on the notion of unit length. 

Each robot has a camera to take a snapshot of the plane and the distance that is visible to this camera is infinite, provided that there are no obstacles blocking its view (i.e., another robot) \cite{Agathangelou2013}. 
Following the fat robot model \cite{Agathangelou2013,Czyzowicz2009},
we assume that a robot $r_i$ can see another robot $r_j$ if there is at least one point on the bounding circle of $r_j$ that is visible to $r_i$.
Similarly, we say that a point $p$ in the plane is visible to a robot $r_i$ if there is a point $p_i$ on the bounding circle of $r_i$ such that the straight line segment $\overline{p_i p}$ does not intersect any other robot. 

Each robot $r_i$ is equipped with an externally visible light that can assume any color from a fixed set $C$ of colors. The set $C$ is the same for all robots. 
The color of the light of robot $r$ at time $t$ can be seen by all robots that are visible to $r$ at time $t$.

At any time $t$, a robot $r_i\in R$ is either active or inactive. When active, $r_i$ performs a sequence of {\em Look-Compute-Move} (LCM) operations:
\begin{itemize}
\item {\em Look:} the robot takes a snapshot of the positions of the robots visible to it in its own coordinate system; 
\item {\em Compute:} executes its algorithm using the snapshot which returns a destination point $x\in \R^2$ and a color $c\in C$; and
\item {\em Move:} moves to the computed destination $x\in \R^2$ (if $x$ is different than its current position) and sets its own light to color $c$.
\end{itemize}

Each robot executes the same algorithm locally every time it is activated and a robot's movement cannot be interrupted by an adversary. Two robots $r_i$ and $r_j$ are said to {\em collide} at time $t$ if the bounding circles of $r_i$ and $r_j$ share a common point at time $t$. To avoid collisions among robots, we thus have to ensure that at all times $\dist(r_i,r_j)\geq 1$ for any robots $r_i$ and $r_j$.

We assume that the execution starts at time $0$. At this time, the robots start in arbitrary positions with $\dist(r_i,r_j)\geq 1$ for any two robots $r_i$ and $r_j$, and the color of the light of each robot is set to {\it off}.

The Pattern Formation problem is now defined as follows: Given any initial positions of the robots, the robots must reposition themselves to form a given target pattern without having any collisions in the process. The target pattern is given as input as a list of locations, one for each robot in the pattern and allowed to be scaled, rotated, translated, and reflected. An algorithm is said to solve the Pattern Formation problem if it always achieves the target pattern from any initial configuration. The pattern should be constructed if this is indeed possible. We measure the quality of the algorithm using the number of distinct colors in the set $C$ and the number of rounds needed to solve the Pattern Formation problem.

\section{Algorithms}
In this section, we present algorithms that solve the Pattern Formation problem for $n \geq 1$ robots of unit disk size in the robots with lights model. Our algorithms assume the fully synchronous setting. Our algorithms execute four phases: Mutual Visibility, Leader Election, Line Formation, and Pattern Formation. We first present two algorithms that use at most 11 colors: one algorithm allows the target pattern to be scaled, while the other does not. 
Neither algorithm requires the target pattern to be rotated or reflected. We note that these operations are considered with respect to the coordinate system of the robot elected leader in the Leader Election phase. In Section~\ref{sec:improving_colors} we revisit the number of colors needed by our algorithms to obtain our final results. 

\subsection{Mutual Visibility}
Starting from any initial configuration, the goal of this phase is to move the robots to positions where every robot can see all other $n-1$ robots. Existing work \cite{https://doi.org/10.48550/arxiv.2206.14423} achieves this by positioning all robots on a convex hull and the presented algorithm runs in $O(n)$ rounds and uses only two colors (\emph{off} for robots that are not yet a corner of the convex hull, and \emph{corner} for robots that are). This algorithm runs in the fully synchronous setting with obstructed visibility in the robots with lights model and avoids collisions. While we use this algorithm as a black box, the one important thing to note is that to ensure that there is always enough space for all robots on the boundary of the convex hull, the corner robots expand the convex hull in each round. Hence, until this phase is completed, the corner robots move each round. 
Figure {\ref{fig:3.1}} shows an initial configuration as well as the final situation of the Mutual Visibility phase.

\begin{figure}[ht]
 \centering
  \includegraphics{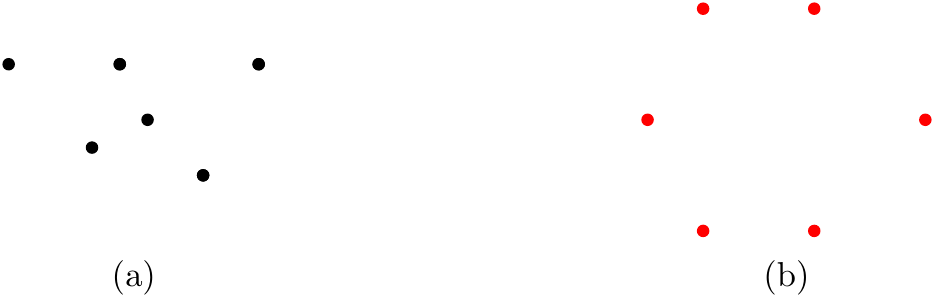}
  \caption{An example of the Mutual Visibility phase: (a) an initial configuration and (b) the end configuration. Throughout the paper the robots are shown as dots for simplicity.}
  \label{fig:3.1}
\end{figure}

\subsection{Leader Election}
The goal of the Leader Election phase is for a single robot to be elected as a leader. After the Mutual Visibility phase, every robot sees all other $n-1$ robots, so the robots know the value of $n$ (while they remain visible to each other, as they have no memory). It is known that electing a leader is not possible using a deterministic algorithm in an anonymous distributed system \cite{attiya2004distributed}. Hence, we use the randomized algorithm by Vaidyanathan {\it et al.}~\cite{vaidyanathan2022fast}. 

Initially, all robots are competing and use a color, say \emph{competing}, to indicate this. The algorithm proceeds in iterations until it finishes with a single leader. Each iteration has a constant number of rounds. In an iteration with $n$ competing robots, each robot flips a coin whose probability of success is $1 / n$. If a robot is successful, then the robot leaves its color as \emph{competing}. Otherwise, it changes its color to \emph{non-competing}. If there is exactly one competing robot left, the robots have successfully elected a leader and this robot changes its color to the \emph{leader} color. Otherwise, this iteration was unsuccessful and all robots change their color to back \emph{competing} and try again. 
Figure {\ref{fig:3.2}} shows an example of the Leader Election phase.

\begin{figure}[ht]
 \centering
  \includegraphics{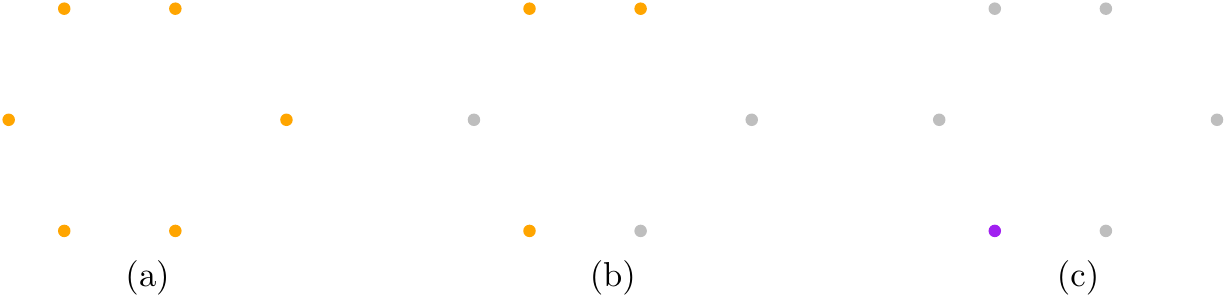}
  \caption{An example of the Leader Election phase: (a) initially all robots are competing (orange), (b) an unsuccessful iteration with competing and non-competing (gray) robots, and (c) a successful iteration, where a single robot is elected leader (purple).}
  \label{fig:3.2}
\end{figure}

\subsection{Line Formation} 
The goal of the third phase is to move the robots from their convex hull positions to a line away from both their old positions and away from where the leader will build the target pattern in the next phase. In this phase the leader will consider its own coordinate system and move the other robots one by one to achieve this goal. Where the leader forms the line of robots depends on the current location of the convex hull of robots and the area required to build the pattern (we assume without loss of generality that the leader will use its origin $(0,0)$ as the topright corner of the bounding box of the target pattern). 

Since the leader is a robot on the convex hull, at least one of the quadrants originating from the leader's position points away from the convex hull and thus does not contain any robots. We assume without loss of generality that this is the lowerleft quadrant in the leader's coordinate system. The leader will use this quadrant to build the line. Without loss of generality, we will explain the Line Formation and Pattern Formation phases using this quadrant. This assumption can easily be removed by mirroring the approach along the $x$- or $y$-axis.

The leader computes the topmost position of the line as follows. Let $x_{hull}$ and $y_{hull}$ denote the minimum $x$- and $y$-coordinate of any robot on the convex hull according to the leader's coordinate system. Similarly, let $x_{pattern}$ and $y_{pattern}$ denote the minimum $x$- and $y$-coordinate of the target pattern according to the leader's coordinate system. The leader now computes the topmost position $(x_{line}, y_{line})$ on the line as $x_{line}=\min (x_{hull},x_{pattern})-M$ and $y_{line}=\min (y_{hull},y_{pattern})-M$, where $M$ is some large constant. Picking this coordinate guarantees that the line will be built below and to the left of both the convex hull and the target pattern. We note that while the leader can compute this coordinate, it cannot store it for use in future rounds, so it needs to move directly from its current position to where it wants to build the line. 

If $(x_{line}, y_{line})$ has only a single closest robot on the convex hull, the leader now moves to this position $(x_{line}, y_{line})$ and changes its color to \emph{follow the leader}. As there are no robots on the line yet, this signals that the robot closest to the leader's position should move to this location. In the next round, the leader moves down one unit (and sets its color to \emph{do not follow}) to avoid colliding with the approaching robot. Once this robot reaches its position, it sets its color to \emph{on the line}. Figure {\ref{fig:3.3}} shows these movements as well as the remaining ones in this phase. If $(x_{line}, y_{line})$ has more than one closest robot on the convex hull, the leader moves to ensure that the leftmost bottommost robot is the unique closest one in the next round. It does this by computing an $x_{temp}$ such that this is the case and then setting $x'_{line}=\min (x_{temp}, x_{line})$ to get new coordinates $(x'_{line},y_{line})$ ensuring that the moves described earlier activate a single robot to move to the leader's position. Once the first robot is in place, this robot will not move for the remainder of the phase. 

\begin{figure}[ht!]
 \centering
  \includegraphics[scale=0.95]{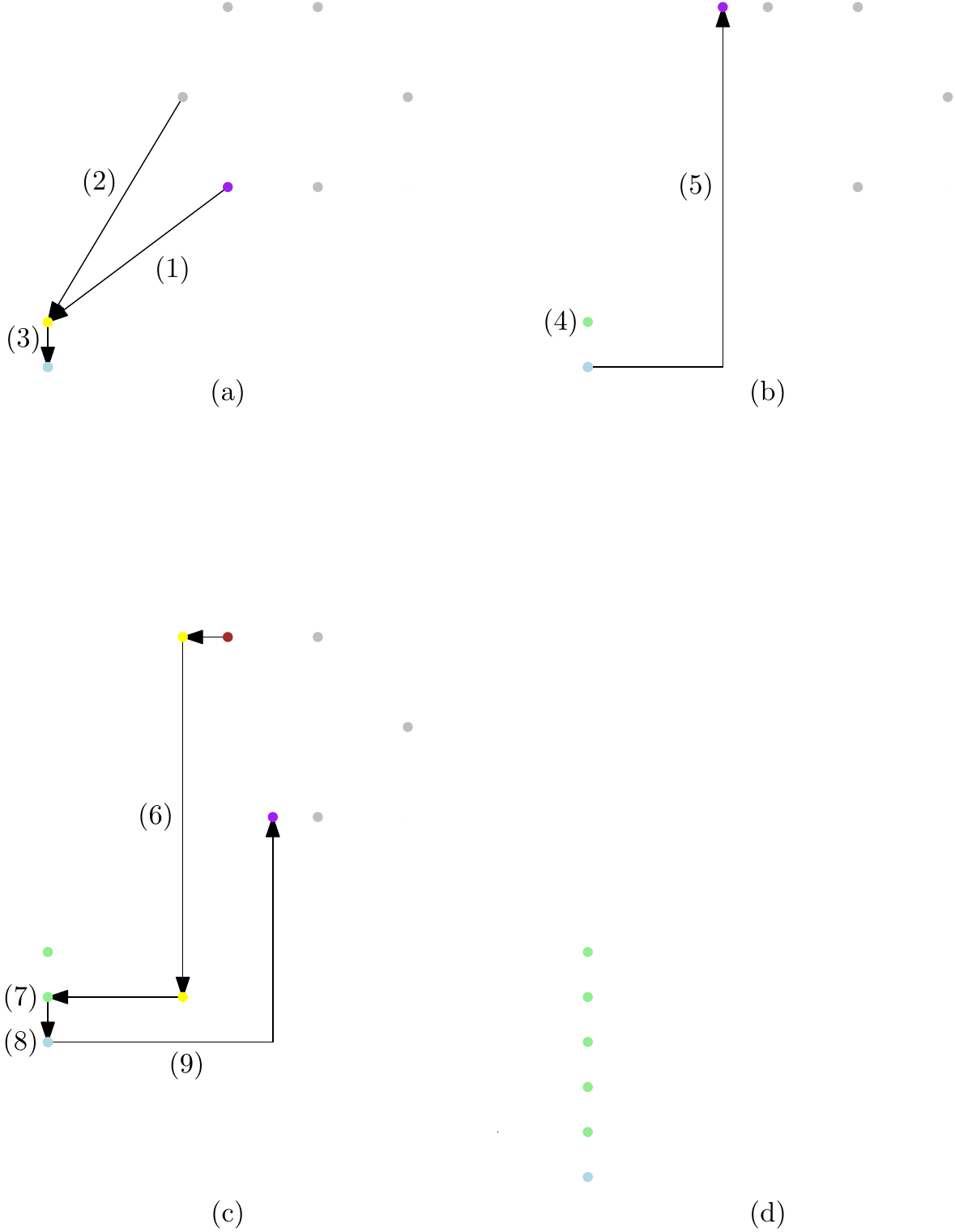}
  \caption{An example of the Line Formation phase (numbers indicate order of operations): (a) the leader (purple) moves to the first position on the line and signals (yellow) the closest robot to move there in the next round before moving out of the way and setting its color to \emph{do not follow} (lightblue), (b) the first robot changes its color to \emph{on the line} (green) and the leader moves next to the next robot to be moved, (c) the leader moves the next robot to its position on the line while the following robot has its color set to \emph{following the leader} (brown), after which the leader moves to guide the next robot, and (d) the result of the Line Formation phase.}
  \label{fig:3.3}
\end{figure}

Now that one robot is in place, the leader can observe the location of this robot to ``remember'' where the line should be built. The leader proceeds to move the remaining robots to the line one at a time. To move a robot, the leader moves one unit to the left of the leftmost bottommost robot $r_i$ remaining on the convex hull and changes its color to \emph{follow the leader}. Robot $r_i$ changes its color to \emph{following the leader}, but does not move yet. The leader then repeatedly moves while robot $r_i$ follows the leader by moving to the leader's previous observed position until it reaches its position on the line. The movements of each robot are: (1) $r_i$ moves one unit to the left while the leader moves down until it reaches the $y$-coordinate of robot $r_i$'s intended position on the line, (2) $r_i$ moves down to where the leader stopped, while the leader moves left to $r_i$'s intended position on the line, and (3) the leader moves two units down and changes its color to \emph{do not follow} while $r_i$ moves to its position on the line and changes its color to \emph{on the line}. The leader now moves right to observe the remaining robots to find the new leftmost bottommost remaining robot. This process is repeated until every robot is positioned on the line. To avoid collisions, the leader ensures that there are two units of vertical space between consecutive robots on the line. We note that a robot can determine whether the phase has ended by seeing whether it can see any colors other than \emph{on the line} and any of the leader's colors. 

\subsection{Pattern Formation}
The goal of the Pattern Formation phase is to relocate the robots to form the given target pattern. The leader can determine where the robots should be placed by placing the first robot at $(0,0)$ with respect to its own coordinate system and building the pattern from there. The leader will build the pattern from this first robot towards the line, so given our assumed positioning this will be done from right to left, top to bottom. As the placement of the robots can be uniquely determined based on the robots on the line and the origin of the leader's coordinate system, the leader can determine what the last placed robot is and thus which part of the pattern has already been completed. During this phase, unless instructed otherwise by the leader, the robots stay on the line and do not move.

We present two versions of the Pattern Formation phase: one that allows scaling of the pattern (requiring one new color) and one that does not (requiring two new colors). 

\subsubsection{The Algorithm with Scaling of the Target Pattern}

In this version, we assume that the target pattern can be scaled. The first position of the first robot is at $(0,0)$ in the leader's coordinate system. Again, we explain our algorithm in the setting where the leader builds the pattern in the lowerleft quadrant of the origin, but the algorithm can be mirrored to obtain the other quadrants. This phase uses one new color to allow robots to remember that they have reached their final position in the pattern. 

After the Line Formation phase, all the robots are on a line with their light set to \emph{on the line}, except for the leader. Similar to the previous phase, the idea is that the leader moves a single robot to the pattern by moving next to it and guiding the robot to the intended position in the pattern. The leader moves the robots from top to bottom as ordered along the line. The following process is illustrated in Figure~\ref{fig:3.5}. To activate a robot $r_i$ on the line, the leader moves next to it and sets its color to \emph{follow the leader}. The leader then moves such that its $y$-coordinate is equal to that of the intended position in the pattern and in the next round the leader moves to the intended position in the pattern. Robot $r_i$ has set its color to indicate that it is following the leader and repeatedly moves to the leader's last observed position. To avoid collisions in the last step, the leader moves two units down and sets its color to \emph{do not follow} to indicate that the robot reached its final position. At this point, $r_i$ sets its color to \emph{at final position}. The leader now proceeds to pick up the next robot and repeats this process until the pattern is formed. If needed, the leader moves to fill the final position itself, if the pattern required exactly $n$ robots\footnote{We note that if the pattern requires more than $n$ robots, it cannot be built and all robots can detect this situation in the Leader Election phase and terminate at that point.}. 

\begin{figure}[ht!]
 \centering
  \includegraphics{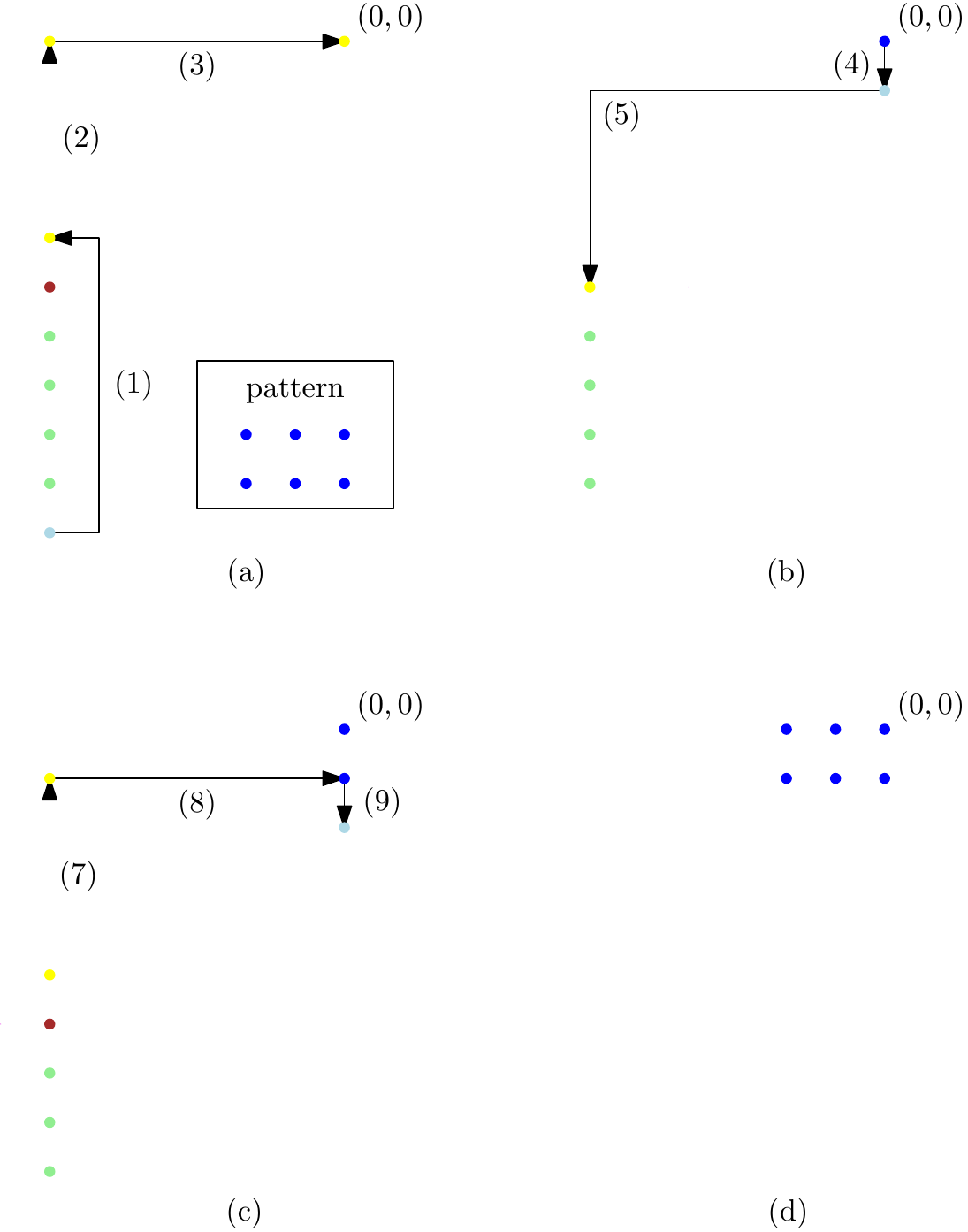}
  \caption{An example of the Pattern Formation phase when scaling is allowed (numbers indicate order of operations): (a) the leader moves to the position above the topmost robot $r_1$ on the line and signals that it should follow (yellow), which causes $r_1$ to change its color to \emph{following the leader} (brown) and move accordingly, (b) the leader moves out of the way so robot $r_1$ can reach its final position, which the leader indicates by changing its color to \emph{do not follow} (lightblue) and $r_1$ sets its color to \emph{at final position} (blue), after which the leader moves to be immediately above the next robot on the line, (c) the leader guides the next robot to its final position, and (d) all robots have reached their final position.}
  \label{fig:3.5}
\end{figure}

In order to ensure that after guiding a robot to its intended position the leader can indeed move down two units, we scale the pattern by a fixed constant factor, say $10$. Since in the given target pattern robots can at most be touching each other (otherwise the pattern cannot be constructed, which can be detected by all robots before the algorithm even starts), scaling the pattern this way ensures that there is ample space between the robots for the leader to move as described. 

\subsubsection{The Algorithm without Scaling of the Target Pattern}
In this version, we assume that the target pattern is not allowed to be scaled. As in the previous case, the pattern is built with respect to the origin $(0,0)$ in the leader's coordinate system and the pattern will be built in the lowerleft quadrant. This version of the phase needs two new colors: one to ``push'' robots, and one to allow robots to remember when they have reached their final position in the pattern.

The high-level idea in this version of the Pattern Formation phase is that the leader first ``pulls'' a robot behind it to position it in such a way that there is a straightline path from the robot to its intended position in the target pattern. Once the robot reaches this position, the leader moves away to indicate to the robot how far it needs to move in the direction \emph{opposite} to the direction the leader moved in, effectively ``pushing'' the robot to its final position. 

In more detail (see also Figure~\ref{fig:3.4}), in order to move a robot $r_i$, the leader starts by moving to the position that has the same $x$-coordinate as the line and same $y$-coordinate of the final position of $r_i$ in the target pattern. Here it changes its color to \emph{follow the leader}, signaling to the topmost robot $r_i$ of the line that it should move to the leader's current position. Robot $r_i$ observes this and changes its color to \emph{following the leader} to remember what it is supposed to do. Let $d$ be the distance between the leader's current position and the position in the pattern where it wants to place $r_i$. Note that since the leader is already at the $y$-coordinate of this intended location, this distance is just the difference in $x$-coordinate. After determining $d$, the leader moves $d$ to the left and changes its color to \emph{push}. This indicates to robot $r_i$ that in the next round it should move from its current position (where the leader used to be) to the position a distance of $d$ away from its current position in the direction opposite to where it sees the leader. Once robot $r_i$ reaches its final position, it changes its color to \emph{at final position} to remember this. Since every robot is moved using exactly one ``pull'' and one ``push'', they know that after the push they have arrived at their final position without the leader having to indicate this in any way. 

\begin{figure}[ht!]
 \centering
  \includegraphics[scale=0.95]{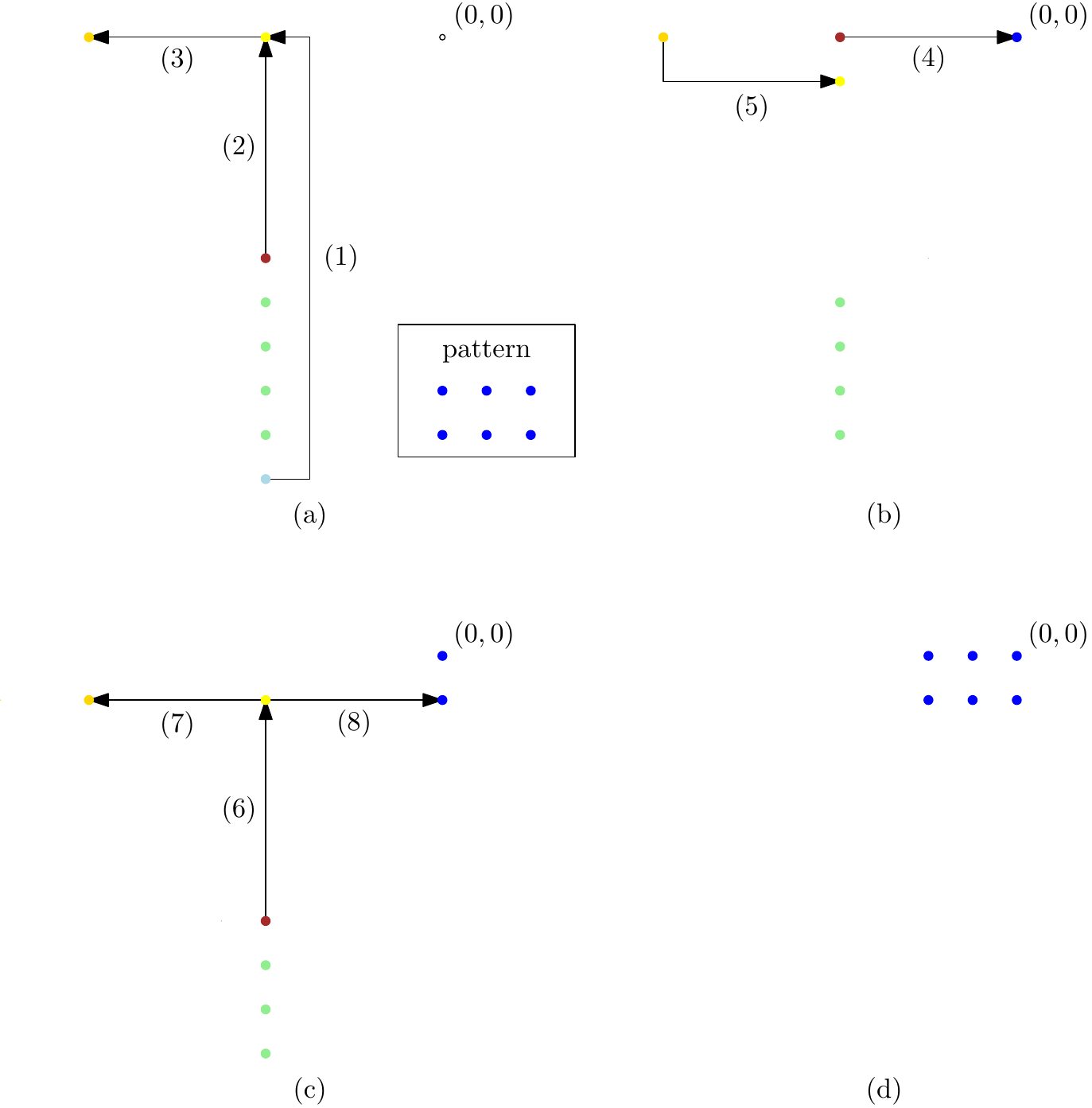}
  \caption{An example of the Pattern Formation phase when scaling is not allowed (numbers indicate order of operations): (a) the leader moves to the position on the line with $y$-coordinate equal to where the first robot $r_1$ needs to be placed and signals that $r_1$ should follow (yellow), which causes $r_i$ to change its color to \emph{following the leader} (brown) and move accordingly while the leader moves away preparing to push $r_1$, (b) the leader changes its color to \emph{push} (gold) and $r_1$ moves distance equal to its current distance to the leader away from the leader to reach its final position and sets its color to \emph{at final position} (blue), after which the leader moves to pull the next robot on the line, (c) the leader first pulls and then pushes the next robot to its final position, and (d) all robots have reached their final position.}
  \label{fig:3.4}
\end{figure}

The leader now moves to fetch the next robot and this process is repeated until the pattern is formed. As before, if needed, the leader moves to fill the final position itself.

\section{Analysis}
We proceed to prove that our algorithms solve the Pattern Formation problem for fat robots and that there are no collisions among the robots. Analogous to our algorithm description, throughout this analysis we will assume without loss of generality that the leader uses the lowerleft quadrant in the Line Formation and Pattern Formation phases.
We start by stating the result by Alsaedi {\it et al.}~\cite{https://doi.org/10.48550/arxiv.2206.14423} on the Mutual Visibility phase.

\begin{theorem}
\label{theorem:3.0}
Our algorithm solves the Mutual Visibility problem for unit disk robots in $O(n)$ rounds without collisions in the fully synchronous setting using two colors.
\end{theorem}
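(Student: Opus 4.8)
The plan is straightforward: this statement is not proved from scratch but imported as a black box from the prior work of Alsaedi \emph{et al.}~\cite{https://doi.org/10.48550/arxiv.2206.14423}, so the proof amounts to checking that the hypotheses of that paper's Mutual Visibility algorithm match the present setting (unit disk robots, fully synchronous scheduling, rigid moves, persistent lights) and then invoking their guarantees verbatim. Nonetheless, it is worth sketching the structure one would verify. The core geometric fact is that once every robot occupies a corner of the common convex hull, mutual visibility holds: for points in strictly convex position no robot lies on the open segment between two others, and for \emph{fat} robots the weaker visibility notion (seeing any single point of the target's bounding circle) only makes this easier, since a hull corner always exposes an arc of its bounding circle to every other corner.

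First I would argue correctness, i.e., that the algorithm drives every robot to a hull corner. Interior robots (color \emph{off}) are repeatedly moved toward the boundary while the current corner robots (color \emph{corner}) enlarge the hull each round to create room; I would show that this process has no stable configuration other than the all-corners one, so upon termination each of the $n$ robots is a distinct vertex of a convex polygon and hence, by the fact above, all $n-1$ others are visible to it.

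Next I would bound the number of rounds and account for the colors. Two colors suffice because each robot only needs to distinguish ``already a corner'' from ``not yet a corner,'' and the synchronous schedule lets every robot act in each round. The $O(n)$ bound follows by charging each round to progress in the number of robots that have permanently joined the hull (or to a bounded amount of hull expansion per already-placed corner); since there are $n$ robots and each contributes a bounded number of rounds, the total is linear. Collision-freedom is maintained as an invariant: the expansion step is designed so that at least unit spacing is preserved whenever a robot is relocated onto the boundary, so $\dist(r_i,r_j)\ge 1$ never fails.

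The main obstacle is the interaction between \emph{obstructed} visibility and the fat-robot model during the transient phase: while robots are still migrating to the hull, a robot's view can be blocked, so one must verify that the information each robot needs (whether it is a corner, and where the current boundary lies) remains observable despite these blockages, and that the continual hull expansion does not inflate the running time beyond linear. These are exactly the points established in~\cite{https://doi.org/10.48550/arxiv.2206.14423}, so in our write-up the theorem is obtained simply by citation.
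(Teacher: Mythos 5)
Your proposal matches the paper exactly: the paper gives no proof of this theorem, but simply states it as the result of Alsaedi \emph{et al.}~\cite{https://doi.org/10.48550/arxiv.2206.14423}, whose Mutual Visibility algorithm (two colors, $O(n)$ rounds, collision-free, fully synchronous, obstructed visibility for fat robots) is invoked as a black box. Your additional sketch of what that cited paper establishes is consistent with how the paper describes the algorithm in its Mutual Visibility section, so there is nothing to correct.
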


Next, we consider the Leader Election phase. This phase builds on the algorithm described and analyzed by Vaidyanathan {\it et al.}~\cite{vaidyanathan2022fast}, who bounded the expected number of rounds needed for this phase as well as the associated probability. The number of colors follows directly from needing three new colors to discern competing, non-competing, and leader robots.

\begin{theorem}
\label{theorem:3.1}
For any $q > 0$, Leader Election can be solved in the fully synchronous setting for $n$ robots in $O(q \log n)$ rounds with probability at least $1 - n^{-q}$ using three colors.
\end{theorem}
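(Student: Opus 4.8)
The plan is to analyze the iterative randomized scheme directly: I would treat each iteration as an independent Bernoulli trial whose ``success'' is the election of a unique leader, bound the per-iteration success probability by a constant, and then boost this to the stated high-probability guarantee by repetition (this recovers the round/probability bound attributed to Vaidyanathan \emph{et al.}, but I prefer the self-contained argument since it is short). Throughout I would use two structural facts. After the Mutual Visibility phase every robot sees all other robots, and no robot moves during Leader Election, so mutual visibility is maintained; hence each robot can count the robots currently colored \emph{competing} and, in particular, detect the event that exactly one remains. Moreover, full synchrony means all robots flip and update in lockstep, so a single ``iteration'' is well-defined and its termination test is correct.

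First I would pin down the per-iteration success probability. At the start of every iteration all $n$ robots are colored \emph{competing} (this holds initially, and after any unsuccessful iteration all robots reset to \emph{competing}), so each of the $n$ robots independently flips a coin with success probability $1/n$. A unique leader is elected in this iteration precisely when exactly one of the $n$ coins is successful, which happens with probability
\begin{equation}
p_n = \binom{n}{1}\,\frac{1}{n}\,\left(1-\frac{1}{n}\right)^{n-1} = \left(1-\frac{1}{n}\right)^{n-1}.
\end{equation}
The key observation is that $p_n$ is bounded below by a constant independent of $n$: writing $(1-1/n)^{n-1} = \bigl(1+\tfrac{1}{n-1}\bigr)^{-(n-1)}$ and using that the classical sequence $(1+\tfrac1m)^m$ increases to $e$, we see that $p_n$ decreases to $1/e$, so $p_n \geq 1/e$ for all $n \geq 2$ (and $p_1 = 1$ trivially). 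Hence every iteration succeeds with probability at least $1/e$.

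Next I would boost this constant success probability. Because the robots are oblivious and every unsuccessful iteration returns the system to the identical all-\emph{competing} configuration using fresh coin flips, the outcomes of distinct iterations are independent. Therefore the probability that none of the first $k$ iterations elects a leader is at most $(1-1/e)^k$, and choosing $k = \lceil c\, q \ln n\rceil$ with $c = 1/\ln\!\bigl(\tfrac{e}{e-1}\bigr)$ forces this to be at most $n^{-q}$. Since each iteration consists of a constant number of rounds, after $O(q\log n)$ rounds a leader has been elected with probability at least $1 - n^{-q}$. The color count is then immediate: the phase uses exactly the three colors \emph{competing}, \emph{non-competing}, and \emph{leader}.

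The main obstacle is the per-iteration probability estimate: everything downstream is a routine geometric-tail computation, but the whole bound hinges on showing that $p_n$ does not degrade toward $0$ as $n$ grows. I expect the cleanest route to be the monotonicity argument above, which reduces $p_n$ to the reciprocal of the increasing sequence $(1+1/m)^m$ and yields the clean uniform bound $p_n \geq 1/e$. A secondary point that I would state carefully is the independence of iterations, since this relies precisely on obliviousness together with the full reset to \emph{competing}, guaranteeing that each iteration begins from the same configuration and draws fresh coins.
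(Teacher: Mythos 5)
Your proof is correct, but it takes a genuinely different route from the paper. The paper does not prove the round/probability bound at all: its entire justification for Theorem~\ref{theorem:3.1} is a deferral to the analysis of Vaidyanathan \emph{et al.}~\cite{vaidyanathan2022fast}, with only the color count (\emph{competing}, \emph{non-competing}, \emph{leader}) argued directly. You instead give a self-contained analysis of the iterative scheme as restated in the paper: per-iteration success probability $\binom{n}{1}\frac{1}{n}\left(1-\frac{1}{n}\right)^{n-1} = \left(1-\frac{1}{n}\right)^{n-1} \geq 1/e$ via the monotonicity of $\left(1+\frac{1}{m}\right)^m$, independence of iterations from obliviousness plus the full reset to \emph{competing}, and a geometric-tail boost with $k = \lceil c\,q\ln n\rceil$ iterations of constantly many rounds each. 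All of these steps check out, including the $n=1$ edge case and the observation that the correctness of the termination test (detecting exactly one competing robot) rests on mutual visibility being preserved because no robot moves in this phase. What your approach buys is transparency and an explicit constant, and it makes visible exactly where full synchrony and obliviousness are used; what the paper's citation buys is brevity and the guarantee that the bound holds for the algorithm of Vaidyanathan \emph{et al.} exactly as they specify it — your argument formally covers the version of the algorithm as paraphrased in this paper, which could in principle differ in minor details from the cited one, though both yield the same statement.
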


Next, we analyze the Line Formation phase.

\begin{lemma}
\label{lemma:3.22}
In every round of the Line Formation phase, only one robot that is not the leader moves from the convex hull to be positioned on the line, avoiding collisions.
\end{lemma}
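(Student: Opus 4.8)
The plan is to decompose the statement into two independent claims: (i) in each round exactly one non-leader robot leaves the convex hull en route to the line, and (ii) the movements executed in that round are collision-free. I would prove (i) by arguing that the robot the leader activates is uniquely determined and that no other robot reacts. For the first robot placed, the leader occupies $(x_{line},y_{line})$ --- or the corrected $(x'_{line},y_{line})$ obtained from $x_{temp}$ when there are ties --- and I would verify that this point has a single closest convex-hull robot, namely the leftmost bottommost one; the tie-breaking construction guarantees this uniqueness. For every later robot, the leader places itself exactly one unit to the left of the leftmost bottommost remaining convex-hull robot. In both cases the \emph{follow the leader} color together with the leader's position constitutes an unambiguous activation signal: since all robots are mutually visible after the first phase, each robot can determine the leftmost bottommost remaining convex-hull robot, and hence exactly one robot recognizes itself as the target and responds.

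I would then show that every other robot holds its position during the round. A robot already on the line carries color \emph{on the line} and, by the algorithm, never moves again in this phase (in particular the first placed robot is fixed, and all subsequent line robots lie strictly below and to the left of it, so the leader's activity never addresses them). A remaining convex-hull robot that is not the designated target does not see the leader in the prescribed adjacent configuration and therefore stays put. Together with (i), this yields that precisely one non-leader robot moves toward the line per round, as claimed.

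For (ii), I would trace the prescribed motions and check that all pairwise distances remain at least $1$. The governing invariants are that the line is built at an $M$-offset below and to the left of both the convex hull and the target region, and that consecutive line positions are separated by two units of vertical space. For the first robot this is immediate: the leader vacates $(x_{line},y_{line})$ by moving one unit down before the robot arrives. For a later robot $r_i$ the leader executes its three-step evasive sequence --- moving down, then left, then two units down --- while $r_i$ correspondingly moves left, down, and finally onto the line; I would verify at the end of each step that the distance between $r_i$ and the leader, between $r_i$ and any already-placed line robot, and between $r_i$ and any remaining convex-hull robot is at least $1$. Separation from the line robots and the hull follows from the $M$-offset and the two-unit spacing, since $r_i$ only ever moves into the otherwise empty lowerleft region; separation from the leader follows from the leader's deliberate downward and leftward evasions.

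The main obstacle I anticipate is the collision analysis of the two robots that move \emph{simultaneously} in a round, namely the leader and its follower $r_i$, during the three-step sequence. Because both are in motion in the same round, I must confirm that their separation stays at least $1$ not only at the round boundaries but throughout each concurrent move --- in particular in the step where $r_i$ moves down while the leader moves left, and symmetrically. Establishing the uniqueness of the tie-broken first target is comparatively routine; the delicate part is certifying that the interleaved leader--follower trajectories never bring the two disks within unit distance.
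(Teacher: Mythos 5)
Your proposal takes essentially the same approach as the paper's proof: exactly one non-leader robot moves per round because the leader activates robots one at a time (with the closest-robot/tie-breaking argument for the first placement and the \emph{follow the leader} signal for subsequent ones), and collision avoidance follows from the pick order, the offset placement of the line, and the structure of the prescribed movements. The one divergence is the ``main obstacle'' you flag---certifying that the leader and its concurrently moving follower remain at least unit distance apart \emph{throughout} each simultaneous move, not just at round boundaries---which the paper's proof does not address at all: it asserts that ``we have only one moving robot'' (silently discounting the leader) and never analyzes the interleaved trajectories, so your concern identifies a looseness in the paper's own argument rather than a gap in yours, and completing that verification would strengthen rather than merely match the published proof.
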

\begin{proof}
Robots only move when the leader activates them to do so. If there are no robots on the line, all robots can see each other and thus they can determine whether they are the robot closest to the leader, resulting in only a single robot being activated. Once there are robots on the line, the leader moves next to a robot and sets its color to \emph{follow the leader} to activate it. As the leader activates the robots one at a time and completes moving a robot to the line before activating the next robot, only one robot moves from the convex hull to be positioned on the line. The ``no collisions'' part of the lemma then follows from the fact that we have only one moving robot, and that the robots are picked from left to right and start with a horizontal movement to move them away from the convex hull before moving vertically, thus ensuring there are no collisions.
\end{proof}

\begin{lemma}
\label{lemma:3.2}
The Line Formation phase uses four new colors.
\end{lemma}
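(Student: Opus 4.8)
The plan is to prove the statement by directly auditing every color assignment the algorithm makes during the Line Formation phase and verifying that the number of distinct \emph{new} colors is four. First I would read the colors off the phase description. The leader uses two colors: \emph{follow the leader}, which signals the currently targeted robot to approach, and \emph{do not follow}, which tells that robot to stop so the leader can step aside. The guided robot uses two colors: \emph{following the leader}, set while it tracks the leader's last observed position, and \emph{on the line}, set once it has been parked in its slot. I would then confirm that these four are distinct from the two colors of the Mutual Visibility phase and the three colors of the Leader Election phase, so that each genuinely counts as new.

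The step I expect to require the most care is sufficiency: showing that these four colors carry the phase through to completion, rather than the algorithm secretly needing a fresh color per robot or per sub-move. Here I would invoke Lemma~\ref{lemma:3.22}, which guarantees the leader relocates robots one at a time and finishes each relocation before starting the next. This serialization lets the four colors be reused indefinitely: in any round at most one robot is in transit, so a single \emph{following the leader} color marks it, while the two leader colors suffice to drive every robot's motion. I would also check that the three sub-moves of a robot's journey (one unit left, then down, then down to the line) need no intermediate color, since the robot is directed purely by the leader's position and flips from \emph{following the leader} to \emph{on the line} only upon seeing \emph{do not follow}.

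Finally I would dispatch the edge cases that could otherwise inflate the count. The first robot is placed by a different mechanism --- the robots collectively identify the one closest to the leader rather than being picked up individually --- but this placement reuses \emph{follow the leader} and \emph{on the line}, with the absence of any \emph{on the line} robot serving as the distinguishing signal, so it costs no new color. Similarly, the phase-end test described in the algorithm reuses \emph{on the line} together with the leader's colors and introduces nothing new. Combining the enumeration with these reuse arguments then yields that the phase uses precisely four new colors.
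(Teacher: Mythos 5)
Your proposal is correct and follows essentially the same approach as the paper: the paper's proof is exactly your first paragraph's enumeration of the four new colors (\emph{follow the leader}, \emph{following the leader}, \emph{do not follow}, and \emph{on the line}). Your additional sufficiency and edge-case arguments are sound but go beyond what the paper includes, since the lemma is simply an accounting of the colors introduced by the already-specified algorithm.
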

\begin{proof}
The algorithm uses one color for the leader to indicate that a robot should follow it, one for a robot to store that it is following the leader, one for the leader to signal that the robot should stop following it, and one final color for the robot to store it is done for this phase.
\end{proof}

\begin{lemma}
\label{lemma:3.3}
The Line Formation phase takes $O(n)$ rounds.
\end{lemma}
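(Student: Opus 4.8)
The plan is to bound the number of rounds by analyzing the per-robot cost of the Line Formation phase and summing over all $n$ robots. The key observation is that the leader processes the robots one at a time (as established in Lemma~\ref{lemma:3.22}), so if I can show that moving any single robot from the convex hull to its position on the line takes a constant number of rounds, then the total follows immediately as $n \cdot O(1) = O(n)$.

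**First I would** account for the cost of placing the very first robot. The leader computes the topmost line position, moves there (possibly first adjusting its $x$-coordinate via $x_{temp}$ to guarantee a unique closest robot), signals \emph{follow the leader}, moves down one unit to clear the way, and then waits for the closest robot to arrive and set its color to \emph{on the line}. Each of these is a constant number of rounds, so the first robot is placed in $O(1)$ rounds. Next I would handle the generic step for a subsequent robot $r_i$: the leader moves one unit left of the leftmost bottommost remaining convex hull robot and signals it, then the three-substep guiding move (horizontal, vertical, final placement plus the leader dropping two units) is carried out, and finally the leader moves right to locate the next robot. Here the only quantity that could depend on $n$ is the distance the leader travels while guiding $r_i$ down to its intended $y$-coordinate on the line; since the moves are rigid, the robot and leader each traverse this distance in a single round regardless of its magnitude, so each robot still costs only $O(1)$ rounds.

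**The main obstacle** I anticipate is being careful that no individual step secretly requires a number of \emph{rounds} that grows with $n$, as opposed to a distance that grows with $n$. Because moves are rigid and a robot covers any computed distance in one Move operation, a long guiding path costs one round rather than many; this is the crucial point that keeps the per-robot cost constant. A secondary subtlety is that the line is built with $2$ units of vertical spacing between consecutive robots and the leader must reposition itself between robots, but repositioning is again a constant number of moves per robot and does not accumulate.

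**I would conclude** by summing: the first robot takes $O(1)$ rounds, each of the remaining $n-1$ robots takes $O(1)$ rounds, and detecting termination (a robot checking whether it sees only \emph{on the line} and leader colors) adds no asymptotic overhead. Hence the Line Formation phase completes in $O(n)$ rounds.
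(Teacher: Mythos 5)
Your proof is correct and follows essentially the same approach as the paper's: both argue that the leader spends a constant number of rounds per robot (guiding it to the line plus repositioning for the next one) and sum over all $n$ robots. Your explicit observation that rigid moves make each traversal cost one round regardless of distance is a worthwhile clarification that the paper leaves implicit.
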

\begin{proof}
In the Line Formation phase, the leader moves each robot from the convex hull to a line. The leader requires at most three rounds to move a robot to its position on the line. After each robot is moved, the leader uses at most three rounds to move next to the next robot. Therefore, the Line Formation phase takes $O(n)$ rounds to move all the robots from the convex hull to the line.
\end{proof}

Lemmas \ref{lemma:3.22}, \ref{lemma:3.2}, and \ref{lemma:3.3} imply the following theorem.

\begin{theorem}
\label{theorem:3.4}
The Line Formation phase takes $O(n)$ rounds, avoids collisions, and uses four new colors.
\end{theorem}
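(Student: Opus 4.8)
The plan is to obtain this theorem as a direct aggregation of the three preceding lemmas, each of which establishes one of the three claimed properties of the Line Formation phase. The $O(n)$ round bound is exactly the content of Lemma~\ref{lemma:3.3}, and the fact that the phase introduces four new colors---\emph{follow the leader}, \emph{following the leader}, \emph{do not follow}, and \emph{on the line}---is exactly the content of Lemma~\ref{lemma:3.2}. So for these two parts I would simply cite the respective lemmas and do no further work.

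The collision-freeness claim is where I would spend essentially all of the argument. Lemma~\ref{lemma:3.22} already guarantees that in each round at most one non-leader robot is in motion toward the line and that this movement is collision-free, which settles all collisions among the non-leader robots. What remains is to account for the leader, which moves throughout the phase and is not covered by that lemma's single-robot framing. I would therefore enumerate the leader-involving collision types: between the leader and the robot it is currently guiding, and between the leader and a stationary robot (either still on the convex hull or already placed on the line). For each I would invoke the leader's choreography from the algorithm---it positions itself one unit to the left of the target robot, begins each relocation with a horizontal move before any vertical move, steps out of the way (setting its color to \emph{do not follow}) before the guided robot completes its move, and keeps consecutive robots on the line two units apart vertically---and check that these invariants maintain $\dist(r_i,r_j)\geq 1$ for every pair at all times.

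The main obstacle, and the only step with genuine content, is verifying these leader-involving cases, precisely because Lemma~\ref{lemma:3.22} is phrased around the single guided robot and says nothing about the leader's own continual motion. Once these cases are checked against the stated spacing invariants, the three lemmas combine directly to yield the theorem.
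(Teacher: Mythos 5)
Your proposal takes essentially the same approach as the paper: the paper's entire proof of this theorem is the one-line aggregation of Lemmas~\ref{lemma:3.22}, \ref{lemma:3.2}, and \ref{lemma:3.3}, exactly your decomposition. The additional case analysis you propose for leader-involving collisions does not appear in the paper, which reads the ``avoiding collisions'' clause of Lemma~\ref{lemma:3.22} as already covering the whole phase (leader choreography included), so your version is, if anything, more careful than the paper's own argument.
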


% With scaling the target pattern
Next, we analyze the Pattern Formation phase. We start with the version where we can scale the pattern.

\begin{lemma}
\label{lemma:3.5}
In every round of the Pattern Formation phase with scaling, only one robot that is not the leader moves from the line to be positioned in the target pattern, avoiding collisions.
\end{lemma}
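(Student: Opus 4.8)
The plan is to prove the two claims of the lemma separately: first, that exactly one non-leader robot moves from the line into the pattern per ``pickup'', and second, that no collision occurs during this process. For the first claim, I would argue structurally, mirroring the proof of Lemma~\ref{lemma:3.22}. The key observation is that a robot on the line only leaves the line when the leader explicitly activates it by moving adjacent to it and setting its color to \emph{follow the leader}; otherwise robots with color \emph{on the line} remain stationary by the algorithm's specification. Since the leader picks up robots one at a time, in the fixed top-to-bottom order along the line, and completes guiding one robot all the way to its final position in the pattern (at which point that robot's color becomes \emph{at final position}) before moving to activate the next one, at most one robot is ever in the \emph{following the leader} state and moving at any given round. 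I would note that robots that have already reached the pattern are colored \emph{at final position} and do not move again, so they cannot be re-activated.

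For the no-collision claim, the structure is to verify pairwise that the single moving robot $r_i$ maintains distance at least $1$ from (a) the leader, (b) the robots still waiting on the line, and (c) the robots already placed in the pattern. The moving robot follows the leader's previously observed positions, so collisions with the leader are avoided because the leader always stays at least one step ahead and, crucially, in the final step moves two units down before $r_i$ advances to its final position. Collisions with the remaining line robots are avoided because the leader guides $r_i$ upward/out from the topmost line position first, away from the other line robots, and because the algorithm maintains two units of vertical spacing between consecutive robots on the line (as established in the Line Formation phase).

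The main obstacle, and the place where scaling becomes essential, is ruling out collisions with robots already placed in the pattern and ensuring the leader itself has room to execute its two-unit downward move after delivering $r_i$. Because the target pattern robots may be arbitrarily close (up to touching) in the unscaled input, the leader's guiding path and its subsequent escape move could intersect an already-placed robot's disk. I would resolve this by invoking the scaling argument: the pattern is scaled by a fixed constant factor (namely $10$ in the description), so that the minimum inter-robot distance in the scaled pattern is at least $10$, leaving clearance of more than $1$ unit around every placed robot. This guarantees that both the path along which $r_i$ is guided and the leader's two-unit evacuation move stay at distance at least $1$ from every previously placed robot, completing the collision-freeness argument.

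Finally, I would combine these two parts into the statement: in each round at most one non-leader robot moves from the line toward the pattern, and throughout this motion $\dist(r_i, r_j) \geq 1$ holds for every pair, so the movement is collision-free. The bulk of the writing effort will go into the geometric clearance verification in the third paragraph; the single-robot-movement claim is essentially a restatement of the algorithm's sequential control flow and should be short.
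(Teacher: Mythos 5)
Your proposal is correct and follows essentially the same route as the paper's proof: one-at-a-time activation by the leader gives the single-mover claim, moving the topmost line robot away from the line avoids collisions there, and the factor-$10$ scaling of the pattern provides the clearance that rules out collisions with already-placed robots. The only cosmetic difference is that the paper pins down the pattern-collision argument via the right-to-left build order (only robots originally within distance $1$ could ever be in the way, and scaling pushes them to distance $10$), whereas you phrase it as uniform clearance around every placed robot; both rest on the same scaling observation.
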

\begin{proof}
By construction, a robot only moves after it is activated by the leader. Since the leader moves the robots one at a time, only one robot will move. By moving the topmost robot of the line and starting by moving this robot vertically up (away from the other robots on the line), there are no collisions with other robots on the line. Furthermore, scaling the pattern by a large enough factor (such as $10$), we ensure that there are also no collisions with robots in the pattern either: the pattern is built from right to left, meaning that we can only collide with robots that were originally at most 
a distance of $1$ from the current robot. However, because the pattern is scaled, this distance is now increased to $10$, meaning that any overlap between the robot's paths into the pattern and any previously placed robots is removed.
\end{proof}

\begin{lemma}
\label{lemma:3.55}
The Pattern Formation phase with scaling uses one new color.
\end{lemma}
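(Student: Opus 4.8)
The plan is to prove this by direct inspection of the colors invoked in the scaling version of the Pattern Formation phase, and then to argue that all but one of them have already been introduced in an earlier phase. Since the statement is purely a counting claim, the proof will be a short bookkeeping argument rather than anything combinatorial or geometric.

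First I would enumerate every color that a robot sets during this phase, reading directly off the algorithm description. The leader sets its light to \emph{follow the leader} when it moves next to the topmost robot $r_i$ on the line to activate it; robot $r_i$ sets its light to \emph{following the leader} while being guided toward its intended position; and the leader sets its light to \emph{do not follow} when it moves two units down to clear the way for $r_i$ to reach its final location. The only remaining color that appears is \emph{at final position}, which $r_i$ sets once it has arrived at its target location in the pattern.

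Next I would appeal to Lemma~\ref{lemma:3.2} (equivalently Theorem~\ref{theorem:3.4}), which established that \emph{follow the leader}, \emph{following the leader}, and \emph{do not follow} are three of the four colors already introduced by the Line Formation phase. Because these three colors are therefore already available, they are not counted as new here. Consequently the sole color introduced by the scaling version of the Pattern Formation phase is \emph{at final position}, which yields the claimed count of exactly one new color.

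The argument has no substantive obstacle; the only point requiring a little care is to confirm that reusing the Line Formation colors does not create ambiguity for the robots across phase boundaries. This is handled by the earlier observation that a robot can detect the end of the Line Formation phase (and hence the start of this phase) from the set of colors it can see, so the reused colors are never misinterpreted, and the count of new colors for this phase is well defined.
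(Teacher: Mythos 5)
Your proof is correct and takes essentially the same approach as the paper's, which simply observes by construction that the only new color is the one a robot uses to record that it has reached its final position (\emph{at final position}); your version just spells out the bookkeeping that the \emph{follow the leader}, \emph{following the leader}, and \emph{do not follow} colors are reused from the Line Formation phase.
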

\begin{proof}
By construction, the algorithm uses only a single new color during this phase: for a robot to store that it has reached its final position.
\end{proof}

\begin{lemma}
\label{lemma:3.6}
The Pattern Formation phase with scaling takes $O(n)$ rounds.
\end{lemma}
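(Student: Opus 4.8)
The plan is to mirror the counting argument already used for the Line Formation phase in Lemma~\ref{lemma:3.3}: since by Lemma~\ref{lemma:3.5} the leader moves exactly one robot at a time, it suffices to bound the number of rounds spent on each individual robot by a constant and then sum over the at most $n$ robots.

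First I would decompose the work the leader performs for a single robot $r_i$ into the steps given in the algorithm description. The leader moves next to the topmost robot on the line and sets its color to \emph{follow the leader}; it then moves so that its $y$-coordinate matches that of $r_i$'s intended position in the pattern; it moves to the intended position; and finally it moves two units down while setting its color to \emph{do not follow}, at which point $r_i$ sets its color to \emph{at final position}. Because the moves are rigid, each computed destination is reached within a single round, so each of these steps costs exactly one round and guiding one robot into the pattern takes only a constant number of rounds. I would then account for the leader repositioning itself between consecutive robots: after placing $r_i$, the leader moves back up to be immediately above the next remaining topmost robot on the line, which again costs only a constant number of rounds.

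Combining the two, the total cost attributable to each robot is bounded by some constant $c$ independent of the robot and of $n$. Since at most $n$ robots are placed this way (with the possible final case of the leader itself moving to fill the last position, contributing at most an additive constant), the phase runs in at most $cn + O(1) = O(n)$ rounds. The one point I expect to require the most care is confirming that every sub-step is genuinely constant and independent of $n$ — in particular that the leader's alignment move and its repositioning between robots do not scale with the number of robots remaining, and that the fixed scaling factor (such as $10$) introduces no distance-dependent slowdown. This is handled by the rigidity of the moves: each destination is reached in a single round regardless of the distance travelled, so neither the scaling nor the inter-robot spacing affects the round count, and the $O(n)$ bound follows.
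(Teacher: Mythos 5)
Your proposal is correct and follows essentially the same argument as the paper: the leader spends a constant number of rounds repositioning itself next to each robot and a constant number of rounds guiding it into the pattern, so summing over at most $n$ robots gives $O(n)$ rounds. Your explicit appeal to rigidity to justify that each move takes one round regardless of distance is a nice clarification, but it is the same counting argument the paper uses.
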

\begin{proof}
In the Pattern Formation phase, the robots move from the line to the target pattern one by one. The leader moves at most three times to move next to the robot it wants to move and it uses three rounds to move this robot to its final position in the pattern. Hence, it takes a constant number of rounds to move one robot from the line to the target pattern. As a result, in total we require at most $O(n)$ rounds to move all the robots to the target pattern.
\end{proof}

Using Lemmas \ref{lemma:3.5}, \ref{lemma:3.55}, and \ref{lemma:3.6}, we get the following theorem.

\begin{theorem}
\label{theorem:3.7}
The Pattern Formation phase with scaling takes $O(n)$ rounds, avoids collisions, and uses one new color.
\end{theorem}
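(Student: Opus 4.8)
The plan is to assemble the theorem from the three preceding lemmas, each of which already isolates one of its three conjuncts. The $O(n)$ round bound is precisely Lemma~\ref{lemma:3.6} and the count of one new color is precisely Lemma~\ref{lemma:3.55}, so for these two claims I would do nothing beyond citing the respective lemmas. The only conjunct demanding a genuine argument is collision-freeness of the whole phase, and for this I would build on Lemma~\ref{lemma:3.5}.

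First I would recall what Lemma~\ref{lemma:3.5} supplies: in each round at most one non-leader robot is in motion, and that robot reaches its target without colliding. Sending the topmost robot of the line upward first keeps it clear of the robots still waiting on the line, while scaling the pattern by a factor such as $10$ inflates the minimum center-to-center separation between pattern robots from $1$ (two unit disks can at most touch) to $10$, so the guided robot's path into the pattern has ample clearance from any previously placed robot. The hard part will be promoting this per-round, single-mover guarantee to collision-freeness of the \emph{entire} phase, because the lemma is phrased in terms of the guided non-leader robot and says nothing explicit about the leader's own trajectory as it shuttles between robots.

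To close this gap I would separately check that the leader itself never comes within distance $1$ of any other robot. While it guides a robot, the robot only ever moves to the leader's previously observed position, so the two remain a full leader-step apart, at distance at least $1$ throughout; before the guided robot settles, the leader first steps two units down to vacate the target, so that location is free when the robot arrives; and when the leader then repositions above the next topmost robot on the line, the $10$-fold scaling keeps the neighbourhood of every already-placed robot empty, so this solo navigation stays clear of both placed robots and those still on the line. Since every round therefore contains at most one guided non-leader robot together with the leader, and both are collision-free, a routine induction over the $O(n)$ rounds of the phase (using Lemma~\ref{lemma:3.6}) shows the whole phase avoids collisions. Combining this with the round bound of Lemma~\ref{lemma:3.6} and the color count of Lemma~\ref{lemma:3.55} yields the theorem.
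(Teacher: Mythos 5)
Your proposal takes essentially the same route as the paper: the paper's entire proof of Theorem~\ref{theorem:3.7} is a one-line combination of Lemmas~\ref{lemma:3.5}, \ref{lemma:3.55}, and \ref{lemma:3.6}, exactly the decomposition you use. Your additional argument that the leader's own trajectory stays collision-free is a sensible patch for a detail the paper leaves implicit inside Lemma~\ref{lemma:3.5}, but it does not change the structure of the proof.
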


Using Theorems \ref{theorem:3.0}, \ref{theorem:3.1}, \ref{theorem:3.4}, and \ref{theorem:3.7}, we can now conclude the following.

\begin{theorem}
\label{theorem:3.8}
Our algorithm solves the Pattern Formation problem when scaling the target pattern is allowed for $n$ unit disk robots in $O(n) + O(q \log n)$ rounds with probability at least $1 - n^{-q}$ without collisions in the fully synchronous setting using 10 colors.
\end{theorem}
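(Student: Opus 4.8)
The plan is to obtain Theorem~\ref{theorem:3.8} as a direct composition of the four phase results established above, namely Theorems~\ref{theorem:3.0}, \ref{theorem:3.1}, \ref{theorem:3.4}, and~\ref{theorem:3.7}. Since the algorithm executes the four phases strictly sequentially, I would first argue overall correctness by tracing the configuration through the phases: the Mutual Visibility phase moves every robot onto the convex hull so that all $n$ robots become mutually visible; the Leader Election phase then designates a unique leader; the Line Formation phase uses this leader to arrange the remaining robots on a line disjoint from both the hull and the region where the pattern will be built; and finally the Pattern Formation phase with scaling relocates the robots one by one into the scaled target pattern. Composing these transformations shows that any initial configuration is mapped to the target pattern, establishing that the algorithm solves the Pattern Formation problem.

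For the round complexity, I would add the per-phase bounds: $O(n)$ for Mutual Visibility, $O(q \log n)$ for Leader Election, $O(n)$ for Line Formation, and $O(n)$ for Pattern Formation, giving a total of $O(n) + O(q \log n)$ rounds. The probability bound of $1 - n^{-q}$ carries over verbatim, since Leader Election is the only randomized phase and all other phases are deterministic and thus contribute no failure probability. Collision-freeness follows because each phase is individually collision-free by the cited theorems and, since the phases run sequentially with all non-leader robots stationary at each phase boundary, no new collision can be introduced during the handoff between phases.

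The color count is where I would spend the most care. Summing the new colors used by each phase gives $2 + 3 + 4 + 1 = 10$, and the crucial observation is that these color sets are pairwise disjoint, so no color is reused across phases. I would argue that this disjointness is exactly what allows the oblivious, memoryless robots to determine which phase is currently active purely from the colors present in their snapshot: for example, the presence of the \emph{leader} color together with the absence of \emph{competing} and \emph{non-competing} signals that Leader Election has completed, while a non-leader robot that observes only \emph{on the line} and the leader's colors can conclude that Line Formation has ended, as described in the algorithm.

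The main obstacle I anticipate is precisely this phase-detection argument. Because the robots retain no information beyond their own light color, the correctness of the composition hinges on each intermediate configuration unambiguously encoding the active phase through the colors that are observed. I would therefore need to verify that the color configurations reachable within one phase can never be mistaken for those of another, and in particular that the transition conditions (``all robots on the convex hull,'' ``exactly one leader remains,'' ``all non-leaders on the line'') are simultaneously and consistently detectable by every robot that must act on them. Once this is in place, the remaining claims reduce to arithmetic over the already-proved per-phase bounds, so the bulk of the work is in justifying that the ten disjoint colors suffice to keep the memoryless robots correctly synchronized across phase boundaries.
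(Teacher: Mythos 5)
Your proposal is correct and takes essentially the same approach as the paper, which proves Theorem~\ref{theorem:3.8} simply by composing Theorems~\ref{theorem:3.0}, \ref{theorem:3.1}, \ref{theorem:3.4}, and~\ref{theorem:3.7} and summing the rounds ($O(n)+O(q\log n)$) and colors ($2+3+4+1=10$). Your additional concern about phase detection by memoryless robots is legitimate but is handled in the paper at the level of the algorithm descriptions (e.g., robots recognizing the end of Line Formation from the visible colors) rather than in this theorem's proof, and it becomes a genuine issue only later, in Section~\ref{sec:improving_colors}, when colors are reused across phases.
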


% Without scaling the target pattern
Finally, we analyze the Pattern Formation phase if scaling is not allowed.

\begin{lemma}
\label{lemma:3.5b}
In every round of the Pattern Formation phase without scaling, only one robot that is not the leader moves from the line to be positioned in the target pattern avoiding collisions.
\end{lemma}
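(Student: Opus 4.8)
The plan is to follow the two-part structure of Lemma~\ref{lemma:3.5}, proving first that a single non-leader robot moves and then that no collisions occur, but replacing the scaling-based clearance argument with a geometric argument specific to the pull-and-push motion. The first part is immediate and identical in spirit to the scaling case: a non-leader robot moves only once the leader has set its light to \emph{follow the leader}, and since the leader handles the robots on the line strictly one at a time---completing both the pull and the push of the current topmost robot $r_i$ before repositioning itself for the next robot---at most one non-leader robot is ever in motion towards the pattern in any round.

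To rule out collisions I would distinguish the possible collision partners of the moving robot $r_i$: (a) robots still on the line, (b) the leader, and (c) robots already placed in the pattern; and I would separately argue that (d) the leader itself never collides with a placed robot. Cases (a), (b), and (d) are the routine ones. For (a), $r_i$ is the topmost robot on the line and its pull is a purely upward vertical move along $x = x_{line}$, carrying it away from all remaining line robots, which lie strictly below it; its push is then a horizontal move at height $y_i$, which lies in the $y$-range of the pattern and hence far above every line robot. For (b), the leader vacates the pull target by moving a distance $d$ to the left before $r_i$ is pushed to the right, so the two move apart; and since $x_{line}=\min(x_{hull},x_{pattern})-M$ forces both the vertical gap crossed by the pull and the horizontal gap $d$ of the push to be at least $M$, their closest approach stays comfortably above $1$. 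For (d), the entire purpose of pushing rather than pulling is that the leader always remains at $x$-coordinate at most $x_{line}$, i.e., at least $M$ to the left of every pattern position, so it never enters the pattern region and cannot collide with a placed robot.

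The main obstacle is case (c), and it is here that the right-to-left build order is essential. When $r_i$ is pushed, it traverses the horizontal segment from $(x_{line}, y_i)$ to its final position $(x_i, y_i)$. Because the pattern is built column by column from right to left, breaking ties from top to bottom, every robot $r_j$ that has already been placed satisfies $x_j \ge x_i$: it lies either in a column strictly to the right of $r_i$ or in the same column but higher up. Consequently the closest point of $r_i$'s push-path to $r_j$ is the right endpoint $(x_i, y_i)$ of the segment, so the distance from $r_j$ to the entire path equals $\dist(r_i, r_j)$, which is at least $1$ because the target pattern is a valid configuration of unit disks. The crucial feature of this bound is that it holds with no slack to spare yet still suffices without scaling: even pattern robots that touch (distance exactly $1$) never push $r_i$'s path below the collision threshold, which is precisely why this version of the phase avoids collisions without enlarging the pattern.

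I would then close the argument by dispatching the two remaining low-risk movements. The vertical pull runs along $x = x_{line}$, which is at horizontal distance at least $M$ from every pattern robot (all of which have $x$-coordinate at least $x_{line}+M$), so it cannot collide with a placed robot; and the leader's repositioning between consecutive robots stays in the region $x \le x_{line}$ and below the pattern, so it encounters neither placed robots nor the upper portion of the line. Assembling cases (a)--(d) yields that the moving robot is collision-free throughout, completing the proof.
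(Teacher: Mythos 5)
Your proof is correct and follows the same route as the paper's own proof: a non-leader robot moves only when activated by the leader (so one at a time), the topmost robot's vertical pull cannot hit the remaining line robots below it, and the right-to-left build order of the pattern guarantees the horizontal push stays at distance at least $1$ from every placed robot. Your write-up is in fact more rigorous than the paper's, which merely asserts that ``a left to right push exists'': your case (c) computation (every placed robot satisfies $x_j \ge x_i$, so the push path's closest approach to $r_j$ is at $r_i$'s final position, at distance $\ge 1$ by validity of the target pattern) supplies exactly the geometric argument the paper omits, and your cases (b) and (d) cover potential collisions involving the leader, which the paper's proof does not discuss at all.
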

\begin{proof}
By construction, a robot only moves once it is activated by the leader. Hence, only one robot moves at a time, and since the leader always picks the topmost robot on the line and moves it vertically away from the line, no collisions with the other robots on the line can occur. 
Since the leader knows the pattern and can determine the last robot placed based on the placement order of the pattern and the visible robots, it can also compute how to push the robot into the pattern to avoid collisions (since the pattern is built right to left from top to bottom, a left to right push exists). By pushing the robot this way, there are thus no collisions.
\end{proof}

\newpage
\begin{lemma}
\label{lemma:3.55b}
The Pattern Formation phase without scaling uses two new colors.
\end{lemma}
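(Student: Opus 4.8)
The plan is to prove this by directly enumerating every color that the leader and the moving robots adopt during the Pattern Formation phase without scaling, and then verifying that exactly two of them have not already been introduced in an earlier phase. First I would recall that the colors \emph{follow the leader} and \emph{following the leader}---used respectively by the leader to signal a robot to move and by that robot to record that it is doing so---were already counted among the four new colors of the Line Formation phase (Lemma~\ref{lemma:3.2}). Since the ``pull'' step in this phase uses precisely these two signals to guide $r_i$ into position behind the leader, it introduces no new color.

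Next I would isolate the two genuinely new mechanisms of this phase. The pushing step requires a dedicated \emph{push} color, so that robot $r_i$ can distinguish the instruction to move \emph{away} from the leader by the observed displacement $d$ from the ordinary following behavior associated with \emph{follow the leader}; this instruction has no analogue in the earlier phases and so its color is new. The second new color is \emph{at final position}, which a robot sets once it has completed its push in order to remember that it is done and should no longer respond to the leader. This is the same bookkeeping color that appears as the single new color in the scaling variant (Lemma~\ref{lemma:3.55}).

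Finally I would argue that no third new color is needed, which is the only point requiring a small observation rather than pure enumeration. In the scaling variant the leader emitted a \emph{do not follow} signal (reused from Line Formation) to indicate that the guided robot had arrived; here, however, because every robot is relocated by exactly one pull immediately followed by exactly one push, the robot can infer completion from its own state transition alone, without any arrival signal from the leader. Hence neither the \emph{do not follow} color nor any replacement is used, and the total count of new colors introduced in this phase is exactly two. I expect no real obstacle in carrying this out; the only care needed is to confirm, from the algorithm description, that the one-pull-one-push structure genuinely lets each robot self-detect completion and thereby removes any need for a separate arrival color.
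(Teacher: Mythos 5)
Your proof is correct and matches the paper's approach: the paper likewise counts exactly the \emph{push} and \emph{at final position} colors as new, with the pull step reusing the Line Formation colors. Your closing observation about the one-pull-one-push structure letting a robot self-detect completion is exactly the justification the paper gives in its algorithm description for why no leader-side arrival signal is needed.
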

\begin{proof}
By construction, the algorithm uses two new colors: one to signal that a robot is being pushed, and one for the robot to store it has reached its final position.
\end{proof}

\begin{lemma}
\label{lemma:3.6b}
The Pattern Formation phase without scaling takes $O(n)$ rounds.
\end{lemma}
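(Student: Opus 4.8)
The plan is to reuse the bookkeeping argument behind Lemma~\ref{lemma:3.6}, adapting it to the pull-and-push mechanism. By Lemma~\ref{lemma:3.5b} the leader positions the robots one at a time, completing each placement before starting the next, so it suffices to bound the number of rounds spent per robot by a constant and then sum over the at most $n$ robots.

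First I would fix a single robot $r_i$ and enumerate the rounds its placement consumes. The leader travels to the point on the line sharing $r_i$'s intended $y$-coordinate and sets its color to \emph{follow the leader}; reaching this point is a single move, with the color change folded into the same Move operation. The \emph{pull} then costs one round, as $r_i$ moves to the leader's vacated position, lining itself up with its target $y$-coordinate. The leader next moves distance $d$ away and switches to \emph{push}, a second single round, after which the \emph{push} itself---$r_i$ moving distance $d$ in the opposite direction to its final position and setting its color to \emph{at final position}---occupies one further round. Crucially, because the moves are rigid, each displacement completes in exactly one round no matter how large the travel distance or $d$ is, so the entire placement of $r_i$ takes only $O(1)$ rounds.

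I would then account for the leader's transition between consecutive robots: after $r_i$ is in place, the leader moves to fetch the next topmost robot on the line, which is again a bounded number of moves and hence $O(1)$ rounds. Summing the constant per-robot cost over the at most $n$ robots, and adding the final $O(1)$ rounds in which the leader may itself move to occupy the last pattern position, yields a total of $O(n)$ rounds, as claimed.

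I do not anticipate a genuine obstacle, since the argument is a straightforward count over a fixed-length move sequence whose length is independent of the pattern and of $n$. The one point demanding care is to confirm that neither the pull distance nor the push distance $d$ can inflate the round count beyond a constant; this is exactly where rigidity of the moves is invoked, guaranteeing that any single computed displacement is realized in one round rather than being throttled across several rounds by an adversary.
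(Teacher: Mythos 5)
Your proposal is correct and follows essentially the same argument as the paper: a constant number of rounds per robot (leader positioning plus the pull-and-push sequence) summed over at most $n$ robots, giving $O(n)$ rounds. The paper's proof is terser (it simply states that at most three moves suffice for positioning and three for placement), while your explicit appeal to rigidity of moves to justify that arbitrarily long displacements take a single round is a valid and slightly more careful rendering of the same count.
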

\begin{proof}
In the Pattern Formation phase, the robots move from the line to the target pattern one by one. The leader uses at most three moves to position itself above the topmost robot on the line and then another three rounds to move this robot to its final position in the pattern. Therefore, it takes a constant number of rounds to move one robot from the line to the target pattern. As a result, the total number of rounds used to move all the robots to the target pattern is $O(n)$.
\end{proof}

Using Lemmas \ref{lemma:3.5b}, \ref{lemma:3.55b}, and \ref{lemma:3.6b}, we get the following theorem.

\begin{theorem}
\label{theorem:3.7b}
The Pattern Formation phase without scaling takes $O(n)$ rounds, avoids collisions, and uses two new colors.
\end{theorem}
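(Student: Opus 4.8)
The plan is to obtain the three asserted properties — the $O(n)$ round bound, collision-freedom, and the use of two new colors — by synthesizing the three preceding lemmas, each of which isolates exactly one of these aspects. Since Lemmas~\ref{lemma:3.5b}, \ref{lemma:3.55b}, and \ref{lemma:3.6b} have already been established, the only work that remains is to check that their guarantees compose correctly into a single statement about the entire phase.

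First I would dispatch the round complexity: Lemma~\ref{lemma:3.6b} already bounds the number of rounds of the Pattern Formation phase without scaling by $O(n)$, so this claim transfers verbatim. Likewise, the color count follows immediately from Lemma~\ref{lemma:3.55b}, which accounts for exactly the two new colors (one for the \emph{push} signal and one for a robot to record that it has reached its final position). Neither of these requires any argument beyond citing the corresponding lemma.

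The collision-freedom claim is the one place where a small aggregation step is needed. Lemma~\ref{lemma:3.5b} is stated per round: in each round only a single non-leader robot moves, and that round is collision-free because the topmost robot on the line is first lifted vertically away from the remaining line robots and is subsequently pushed from right to left, so no previously placed pattern robot lies on its path. To conclude collision-freedom for the whole phase I would observe that the phase is a finite sequence of such rounds, and apply Lemma~\ref{lemma:3.5b} to each of them; since no round introduces a collision, none occur over the course of the phase.

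I do not expect a genuine obstacle here, as the three lemmas already carry the technical content. The only point requiring mild care is to confirm that the collision-freedom argument covers the leader's own motion in addition to the single activated robot — the leader moves during both the pull and the push — and that this holds in every round, so that the per-round guarantee of Lemma~\ref{lemma:3.5b} genuinely yields a global one. With that check in place, combining the three lemmas immediately gives the theorem.
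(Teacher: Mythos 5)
Your proposal is correct and matches the paper's own proof, which derives the theorem directly by combining Lemmas~\ref{lemma:3.5b}, \ref{lemma:3.55b}, and \ref{lemma:3.6b} exactly as you do. Your extra care in lifting the per-round collision guarantee to the whole phase (including the leader's motion) is a reasonable elaboration of what the paper leaves implicit, but it is the same argument.
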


Using Theorems \ref{theorem:3.0}, \ref{theorem:3.1}, \ref{theorem:3.4}, and \ref{theorem:3.7b}, we obtain our final result.

\begin{theorem}
\label{theorem:3.8b}
Our algorithm solves the Pattern Formation problem when scaling the target pattern is not allowed for $n$ unit disk robots in $O(n) + O(q \log n)$ rounds with probability at least $1 - n^{-q}$ without collisions in the fully synchronous setting using 11 colors.
\end{theorem}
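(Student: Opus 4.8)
The plan is to obtain \autoref{theorem:3.8b} purely by composition: the algorithm runs the four phases---Mutual Visibility, Leader Election, Line Formation, and the no-scaling Pattern Formation---strictly in sequence, so I would argue that each phase terminates before the next begins and that the costs of the phases simply add. First I would sum the per-phase round bounds: Mutual Visibility contributes $O(n)$ rounds by \autoref{theorem:3.0}, Leader Election contributes $O(q\log n)$ rounds by \autoref{theorem:3.1}, and Line Formation and the no-scaling Pattern Formation each contribute $O(n)$ rounds by \autoref{theorem:3.4} and \autoref{theorem:3.7b}, for a total of $O(n) + O(q\log n)$. The success probability $1 - n^{-q}$ is inherited verbatim from Leader Election, since that is the only randomized phase; the other three phases are deterministic and therefore succeed with certainty once they start from a valid configuration.

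For the color count I would simply add the colors introduced in each phase: $2$ for Mutual Visibility (\emph{off} and \emph{corner}), $3$ new colors for Leader Election (\emph{competing}, \emph{non-competing}, \emph{leader}), $4$ new colors for Line Formation, and $2$ new colors for the no-scaling Pattern Formation, giving $2+3+4+2 = 11$. The point I would be careful to state is that these color sets are taken to be mutually disjoint, which is precisely what makes the count additive rather than permitting reuse. Collision-freeness of the whole execution then follows from the collision-free guarantees of \autoref{theorem:3.0}, \autoref{theorem:3.4}, and \autoref{theorem:3.7b} (Leader Election involves no movement, so it contributes no collisions).

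The step I expect to be the main obstacle---and where I would concentrate the argument---is justifying that the phases can be cleanly sequenced even though the robots are oblivious and must infer the active phase solely from the colors in their snapshot. Because the four phases use disjoint color sets, I would argue that the set of colors currently present unambiguously identifies the active phase, so every robot executes the correct subroutine. I would then verify that the terminal configuration of each phase is a legal starting configuration for the next: after Mutual Visibility all robots are mutually visible, which Leader Election needs to learn $n$; after Leader Election exactly one \emph{leader} exists, which Line Formation requires; and after Line Formation the robots lie on the line in the order Pattern Formation expects. Finally I would confirm that no collisions arise at the phase boundaries, since at each transition the non-leader robots either stay put or move exactly as dictated by a single already-analyzed phase, so the earlier collision-free guarantees apply unchanged across the seams.
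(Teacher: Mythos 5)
Your proposal is correct and matches the paper's approach exactly: the paper also obtains \autoref{theorem:3.8b} by composing \autoref{theorem:3.0}, \autoref{theorem:3.1}, \autoref{theorem:3.4}, and \autoref{theorem:3.7b}, summing the rounds and the (disjoint) color counts $2+3+4+2=11$ and inheriting the probability from Leader Election. Your additional care about phase-boundary sequencing is a sound elaboration of what the paper leaves implicit, not a different route.
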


\section{Improving the Number of Colors}
\label{sec:improving_colors}
In this section, we improve the number of colors used to solve the Pattern Formation problem by reusing some of the colors used in the different phases discussed in the previous sections.

The Mutual Visibility phase uses two colors: \emph{off} for non-corner robots and \emph{corner} for corner robots. The Leader Election phase uses three new colors to keep track of status of the different robots as \emph{competing}, \emph{non-competing}, and \emph{leader}. We start by arguing that instead of using a new color for non-competing robots, we can use the color \emph{off} instead. 

\begin{lemma}
\label{lemma:3.9}
The \emph{off} color can be reused as the \emph{non-competing} color in the Leader Election phase.
\end{lemma}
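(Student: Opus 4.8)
The plan is to prove that no robot ever confuses the two meanings of \emph{off}---``non-corner'' in the Mutual Visibility phase and ``non-competing'' in the Leader Election phase---so that reusing the color changes no robot's computed action and therefore preserves the correctness of both phases.

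First I would dispose of every round in which some phase-specific color is visible. The color \emph{corner} occurs only in Mutual Visibility, while \emph{competing} and \emph{leader} occur only in Leader Election, and since the robots are all mutually visible once Mutual Visibility is underway, every robot sees these colors whenever any robot displays them. Hence a robot that sees a \emph{corner} robot knows it is in Mutual Visibility, and a robot that sees a \emph{competing} or \emph{leader} robot knows it is in Leader Election. Because Mutual Visibility always has at least one corner robot while it runs (the corners are exactly the robots that expand the hull each round) and Leader Election always has at least one competing or leader robot except during an all-fail iteration, the reused color \emph{off} is unambiguous in all such rounds: it denotes a non-corner or a non-competing robot according to the already-identified phase.

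The remaining, and hardest, case is a round in which every robot is \emph{off}, so that no distinguishing color is visible. This occurs only at time~$0$ (all lights start \emph{off}) and in an all-fail iteration of Leader Election (every coin flip failed). The key separating observation is geometric: during Leader Election the robots never move, so they remain in the mutually visible convex-position configuration produced by Mutual Visibility, whereas at time~$0$ the robots occupy arbitrary positions. I would therefore resolve an all-off round by appealing to the Mutual Visibility routine itself, which is driven by the geometry of the snapshot rather than by the \emph{off}/\emph{corner} labels: on arbitrary positions it simply continues Mutual Visibility, while on a configuration that is already mutually visible it recognizes completion, sets the robots to \emph{corner}, and thereby triggers the all-corner transition to \emph{competing}. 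Applied to the all-fail Leader Election round this routes the robots back to \emph{competing}, which is exactly the intended ``retry'' behaviour.

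Finally I would check that the one configuration this does not separate outright---an initial configuration that happens to be mutually visible---causes no harm: there Mutual Visibility has nothing left to achieve, so proceeding (via the all-corner transition) to \emph{competing} and beginning Leader Election is precisely the desired outcome, and no robot is penalized for skipping directly ahead. Having shown that every all-off round leads to the correct action and that every other round is unambiguously labelled by the presence of \emph{corner} or of \emph{competing}/\emph{leader}, I would conclude that substituting \emph{off} for \emph{non-competing} leaves every robot's behaviour unchanged, which proves the lemma. The main obstacle throughout is exactly this all-off coincidence, and the crux of overcoming it is recognizing that the immobility of the robots during Leader Election keeps them in a configuration that the Mutual Visibility routine already treats as complete.
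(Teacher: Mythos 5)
There is a genuine gap: your proposal only analyzes the lifetime of the reused \emph{off} color within the Mutual Visibility and Leader Election phases, but the \emph{non-competing} color does not disappear when Leader Election ends. All robots that lost the election keep this color (now \emph{off}) while they sit on the convex hull during the Line Formation phase, and that is where the real danger lies. Your disambiguation argument rests on mutual visibility ("every robot sees these colors whenever any robot displays them"), which holds during Leader Election because no robot moves, but it breaks down in Line Formation: the leader and the partially built line are far away (the line is placed at distance $M$ below and to the left of the hull), and under obstructed visibility a hull robot may see neither the \emph{leader}/\emph{follow}-type colors nor any \emph{on the line} robot. Such a robot, seeing only \emph{off} robots, will misidentify its phase as Mutual Visibility, conclude it is a corner, set its color to \emph{corner}, and start moving to expand the hull. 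The paper's proof devotes its entire second half to exactly this scenario, arguing that the spurious movement is harmless: the expansion is directed away from the line (any robot that would expand toward the line can see the line), so no collisions occur, and each confused robot stops as soon as it sees the leader or a line robot, which eventually happens as the hull is dismantled. Without this part, the lemma is not established, because reusing \emph{off} could in principle corrupt the Line Formation phase even if Leader Election itself works.

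Your handling of the remaining cases is essentially the paper's: the observation that during Leader Election a non-competing robot always sees a \emph{competing} or \emph{leader} robot matches the paper's first paragraph, and your resolution of the all-\emph{off} (all-fail) iteration --- the robots fall back to Mutual Visibility behaviour, which on an already mutually visible configuration immediately funnels them back to \emph{competing} --- is a more mechanistic version of the paper's footnote that a spurious restart of Leader Election is harmless. So the Leader-Election half of your argument is sound; what is missing is the propagation of the reused color into the subsequent phase and the collision-freedom/self-correction argument needed there.
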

\begin{proof}
We need to argue that the Leader Election phase still works as intended and that reusing the color does not cause any problems in later phases. 

When the robots activate in the Leader Election phase, they change their color to \emph{competing}. During this phase, unsuccessful robots change their color to \emph{off} as non-competing robots. During this process all robots are mutually visible and thus the non-competing robots can always see either a competing robot or the leader robot\footnote{If at any point all robots become non-competing, they could incorrectly conclude that they are in the Mutual Visibility phase, but since they would restart the Leader Election phase anyway, this is no issue.}, which have colors that help them identify the phase. Hence, they can conclude that they are in the Leader Election phase and thus act accordingly. 

Since some robots have the \emph{non-competing} color during part of the Line Formation phase, we now argue that changing this to the \emph{off} color does not cause any issues. We note that any robot that can see a robot with the \emph{leader} color or a robot with the \emph{on the line} color can conclude that it should not do anything unless the leader activates it and thus these robots cannot cause issues in the execution of the algorithm. However, if a robot sees neither of these colors it can mistakenly think that it is in the Mutual Visibility phase. We note that the robot would conclude that it is a corner and thus set its color to \emph{corner}. As a corner robot, it would move to expand the convex hull away from where the line is being built (as any robot that would expand the convex hull towards the line can see the line). As robots are removed from the convex hull by the leader, every robot will eventually see the line and at that point it can conclude the correct phase again and stop moving. We note that since the robots move to expand the convex hull away from the other robots, no collisions can occur. Hence, while the robots can mistake the phase they are in, the fact that they would stop the moment they see the leader or a robot on the line implies that this does not cause issues for our approach. 
\end{proof}

The Line Formation phase uses four new colors in order to allow the leader to activate a robot to follow it, for robots to store whether they are following the leader, for the leader to indicate that a robot should stop following it, and for a robot to store that it reached its position on the line. We argue that we can reuse the \emph{competing} color from the Leader Election phase as the color to indicate that a robot is following the leader.

\begin{lemma}
\label{lemma:3.10}
The \emph{competing} color can be reused as the \emph{following the leader} color in the Line Formation and Pattern Formation phase.
\end{lemma}
\begin{proof}
The \emph{competing} color was originally used to elect a single leader, so there was no leader before. Now since the robot with the \emph{competing} color sees the leader, it knows a leader has already been elected, and thus it can conclude that this is the Line Formation or Pattern Formation phase, where it has to follow the leader. Therefore, the \emph{competing} color can be reused as the \emph{following the leader} color in these phases.
\end{proof}

Next, we argue that we can also reuse the \emph{leader} color as the \emph{do not follow} color in these phases. 

\begin{lemma}
\label{lemma:3.11}
The \emph{leader} color can be reused as the \emph{do not follow} color in the Line Formation and Pattern Formation phase.
\end{lemma}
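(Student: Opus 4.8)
The plan is to mirror the reasoning used in Lemmas~\ref{lemma:3.9} and~\ref{lemma:3.10}: I would argue that reusing the color neither breaks the intended behavior of the phase nor creates ambiguity for the observing robots. The first step is to observe that both the \emph{leader} color and the \emph{do not follow} color are, by construction, only ever displayed by the single elected leader; no other robot sets either color during the Line Formation or Pattern Formation phases. Hence the question reduces to whether a single robot can safely use one color to play both roles, and whether the remaining robots can still interpret it correctly.

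Next I would check semantic compatibility. The \emph{leader} color tells every other robot that a leader has already been elected and that it should remain stationary until the leader activates it with the \emph{follow the leader} color. The \emph{do not follow} color tells a robot that was following the leader that it should now stop. For any robot that is not currently being guided, both messages amount to the same instruction, namely to stay put and wait for a \emph{follow the leader} signal, so merging the two colors changes nothing for such robots. The only robot for which the interpretations could differ is the one currently in the \emph{following the leader} state, and here I would verify that this robot correctly halts at the instant the leader switches to the \emph{leader} color (now also serving as \emph{do not follow}), which is exactly the behavior the original \emph{do not follow} color was meant to trigger. Because the leader holds exactly one color per round and alternates between \emph{follow the leader} while guiding and \emph{leader} (i.e.\ \emph{do not follow}) while not guiding, there is no moment at which it must simultaneously convey ``I am the leader'' and a distinct ``stop following,'' so one color suffices. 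A convenient consequence worth highlighting is that this merge preserves the leader's identity for free: throughout both phases the leader always displays either \emph{follow the leader} or \emph{leader}, each used exclusively by the leader, so every other robot can always recognize it and thereby the current phase.

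The step I expect to require the most care is the timing argument in the fully synchronous model. I must confirm that there is no round in which a robot that has just been instructed to follow mistakes the leader's \emph{leader}-colored state for a \emph{do not follow} instruction before it has begun moving, nor one in which it keeps following after the leader has switched colors. Since a robot reacts to \emph{do not follow} only once it is already in the \emph{following the leader} state, and since at the start of each guiding cycle no robot is yet in that state while the leader still carries the plain \emph{leader} color, I expect this to resolve cleanly; but it is the part of the argument that the proof should make explicit, and it is where I would spend the most attention.
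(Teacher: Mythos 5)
Your proposal is correct and follows essentially the same approach as the paper's proof: the paper likewise argues that both the \emph{leader} color and the \emph{do not follow} color serve the identical function of marking the unique leader while ensuring all other robots stay put, so a single color suffices. Your version simply spells out in more detail (per-robot case analysis and synchronous timing) what the paper states in two sentences.
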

\begin{proof}
The function of both the \emph{leader} color in the Leader Election phase and the \emph{do not follow} color in the Line Formation and Pattern Formation phases is to indicate which robot is the leader, while ensuring that the other robots do not move. Hence, both colors allow the leader to move around without affecting moving the other robots, allowing it to get into position to guide the robots one at a time to their new positions on the line or in the pattern. 
\end{proof}

The above reuse of colors implies the following theorems. 

\begin{theorem}
\label{theorem:3.12}
Our algorithm solves the Pattern Formation problem when scaling the target pattern is allowed for $n$ unit disk robots in $O(n) + O(q \log n)$ rounds with probability at least $1 - n^{-q}$ without collisions in the fully synchronous setting using 7 colors.
\end{theorem}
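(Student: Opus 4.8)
The plan is to derive this result as a corollary of Theorem~\ref{theorem:3.8} combined with the three color-reuse results of this section, Lemmas~\ref{lemma:3.9}, \ref{lemma:3.10}, and \ref{lemma:3.11}. Theorem~\ref{theorem:3.8} already establishes that, using 10 colors, the algorithm solves the Pattern Formation problem with scaling for $n$ unit disk robots in $O(n) + O(q\log n)$ rounds, with probability at least $1 - n^{-q}$, without collisions, and in the fully synchronous setting. Since the color identifications introduced in this section never change which destination a robot computes, never reorder the phases, and do not touch the randomized leader-election subroutine, the round bound, the success probability, the collision-freeness, and the synchronization assumption all transfer verbatim. Thus the only thing left to establish is that the reuses cut the color count from 10 down to exactly 7 while preserving correctness.

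First I would carry out the bookkeeping on the color count. The ten colors are \emph{off} and \emph{corner} from Mutual Visibility; \emph{competing}, \emph{non-competing}, and \emph{leader} from Leader Election; \emph{follow the leader}, \emph{following the leader}, \emph{do not follow}, and \emph{on the line} from Line Formation; and \emph{at final position} from Pattern Formation with scaling. Lemma~\ref{lemma:3.9} identifies \emph{non-competing} with \emph{off}, Lemma~\ref{lemma:3.10} identifies \emph{following the leader} with \emph{competing}, and Lemma~\ref{lemma:3.11} identifies \emph{do not follow} with \emph{leader}. Each identification eliminates one distinct color, leaving the seven colors \emph{off} (also serving as \emph{non-competing}), \emph{corner}, \emph{competing} (also \emph{following the leader}), \emph{leader} (also \emph{do not follow}), \emph{follow the leader}, \emph{on the line}, and \emph{at final position}.

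The step I expect to be the main obstacle is verifying that the three identifications are mutually compatible, since each lemma justifies its reuse in isolation and I would need to confirm that applying all three at once never leaves a robot unable to infer its current phase or the intended meaning of an observed color. The key observation is that every disambiguation relies on context that the other two merges leave intact. A robot separates the reused \emph{off}/\emph{non-competing} meaning from the Mutual Visibility meaning by whether it sees a \emph{competing}, \emph{leader}, or \emph{on the line} robot, and none of these witness colors is merged away; similarly, the \emph{competing}/\emph{following the leader} and \emph{leader}/\emph{do not follow} distinctions depend only on whether a leader has already been elected and on the presence of line or pattern colors, all of which survive the other identifications. Hence the individual disambiguation arguments of Lemmas~\ref{lemma:3.9}, \ref{lemma:3.10}, and \ref{lemma:3.11} remain simultaneously valid, the algorithm behaves exactly as in Theorem~\ref{theorem:3.8}, and the claimed seven-color bound follows with all other guarantees unchanged.
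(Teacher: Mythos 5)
Your proposal is correct and follows essentially the same route as the paper, which derives Theorem~\ref{theorem:3.12} directly from Theorem~\ref{theorem:3.8} together with the three reuse results, Lemmas~\ref{lemma:3.9}, \ref{lemma:3.10}, and \ref{lemma:3.11}, reducing the count from 10 to 7 colors. Your explicit check that the three identifications are mutually compatible (i.e., that no witness color used for disambiguation is itself merged away) is actually more careful than the paper, which asserts the implication without spelling this out.
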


\begin{theorem}
\label{theorem:3.13}
Our algorithm solves the Pattern Formation problem when scaling the target pattern is not allowed for $n$ unit disk robots in $O(n) + O(q \log n)$ rounds with probability at least $1 - n^{-q}$ without collisions in the fully synchronous setting using 8 colors.
\end{theorem}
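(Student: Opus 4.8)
The plan is to derive Theorem~\ref{theorem:3.13} from the 11-color guarantee of Theorem~\ref{theorem:3.8b} by applying the three color-reuse lemmas to eliminate exactly three colors, while verifying that none of the remaining claimed properties (the round complexity, collision-freeness, correctness, the fully synchronous setting, and the success probability) are affected.

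First I would account for the 11 colors by matching the named colors to the four phases. The Mutual Visibility phase contributes \emph{off} and \emph{corner} (Theorem~\ref{theorem:3.0}); Leader Election contributes the three new colors \emph{competing}, \emph{non-competing}, and \emph{leader} (Theorem~\ref{theorem:3.1}); Line Formation contributes the four new colors \emph{follow the leader}, \emph{following the leader}, \emph{do not follow}, and \emph{on the line} (Theorem~\ref{theorem:3.4}); and the Pattern Formation phase without scaling contributes the two new colors \emph{push} and \emph{at final position} (Theorem~\ref{theorem:3.7b}). Summing gives $2+3+4+2=11$, matching Theorem~\ref{theorem:3.8b}.

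Next I would invoke the reuse lemmas to merge three of these colors into colors already in use: by Lemma~\ref{lemma:3.9} the \emph{off} color also serves as \emph{non-competing}, by Lemma~\ref{lemma:3.10} the \emph{competing} color also serves as \emph{following the leader}, and by Lemma~\ref{lemma:3.11} the \emph{leader} color also serves as \emph{do not follow}. The three merged pairs are pairwise disjoint, so the surviving palette consists of \emph{off}/\emph{non-competing}, \emph{corner}, \emph{competing}/\emph{following the leader}, \emph{leader}/\emph{do not follow}, \emph{on the line}, \emph{follow the leader}, \emph{push}, and \emph{at final position} --- exactly $8$ distinct colors.

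Finally I would argue that all other guarantees carry over unchanged. Each reuse lemma only relabels which physical color a robot displays for a given role, and its proof already shows that every robot can still infer its current phase and intended behavior from the colors it sees; since the three merges act on disjoint pairs, they compose without creating new ambiguity. Hence the activation pattern and movement schedule are identical to the 11-color algorithm, so the $O(n)$ bounds of Theorems~\ref{theorem:3.4} and~\ref{theorem:3.7b}, the $O(q\log n)$ bound and $1-n^{-q}$ success probability of Theorem~\ref{theorem:3.1}, the collision-freeness of every phase, and the fully synchronous execution are all preserved. Combining these facts with the reduced count of $8$ colors gives the theorem. The step I expect to require the most care is this compositional argument: although each reuse lemma is proved in isolation, one must confirm that applying all three at once never yields a configuration in which a robot sees a reused color yet cannot distinguish, say, Leader Election from a later phase. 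The observation that makes this routine is that the presence or absence of the \emph{leader} color, together with \emph{on the line}, \emph{push}, and the leader's signaling colors, always breaks such ties, exactly as used in the proofs of Lemmas~\ref{lemma:3.9}--\ref{lemma:3.11}.
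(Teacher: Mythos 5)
Your proposal is correct and follows exactly the paper's route: the paper derives Theorem~\ref{theorem:3.13} by applying the three reuse lemmas (Lemmas~\ref{lemma:3.9}, \ref{lemma:3.10}, and~\ref{lemma:3.11}) to the 11-color guarantee of Theorem~\ref{theorem:3.8b}, reducing $11-3=8$ colors while leaving the algorithm's schedule, and hence its round complexity, collision-freeness, and success probability, untouched. Your explicit accounting of the color palette and the disjointness of the three merged pairs is a careful spelling-out of what the paper leaves implicit in its one-line justification (``The above reuse of colors implies the following theorems'').
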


\section{Conclusion}
We studied the Pattern Formation problem for unit disk robots in the robots with lights model under obstructed visibility. We described two algorithms for this problem, depending on the assumptions made to solve this problem. If the target pattern is allowed to be scaled with respect to the leader robot's coordinate system, our initial algorithm used 10 colors, which we subsequently improved to 7 colors. If scaling is not allowed, our algorithm needs one additional color: initially 11, which we then improved to 8. 

Our algorithms run in $O(n) + O(q \log n)$ rounds with probability at least $1 - n^{-q}$ in the fully synchronous model. Interestingly, unlike previous work, our algorithms do not require any additional assumptions on the capabilities of the robots or any shared information or coordinate system. 

There are a number of interesting directions in which we could consider extending this work. For example, we have no lower bounds indicating that the number of colors we use is optimal and thus the natural open problems are both trying to improve the number of colors or showing that this is not possible using a lower bound. 

Another interesting direction is moving to a semi-synchronous or even asynchronous model. Given the strict ordering required in our algorithms, this is challenging, as our ``push'' and ``pull'' operations use the leader's position to determine where to move. Some of these issues may be resolved by the leader not moving until the robot following it is at the correct position, but ensuring that this does not increase the number of rounds drastically or causes collisions is not straightforward. 

Returning to the classical model (without lights) it would also be interesting to determine whether the Pattern Formation problem can be solved efficiently in this model as well or whether additional assumptions are required.

%%
%% Bibliography
%%

\bibliography{references}

\end{document}